\theoremstyle{plain}
\newtheorem{theorem}{Theorem}[section]
\newtheorem{prop}[theorem]{Proposition}
\newtheorem{cor}[theorem]{Corollary}
\newtheorem*{theorem*}{Theorem}
\theoremstyle{definition}
\newtheorem{defs}[theorem]{Definition}
\newtheorem{example}[theorem]{Example}
\newtheorem{remark}[theorem]{Remark}
\begin{document}

\title[States in generalized probabilistic models]
{States in generalized probabilistic models: an approach based in algebraic geometry}

\author[C\'esar Massri\and Federico Holik\and \'Angelo Plastino]
{{\sc C\'esar Massri}$^{2,4}$ \ {\sc ,} \ {\sc Federico Holik}$^{1,3}$ \ {\sc
and} \ {\sc \'Angelo Plastino}$^{1}$}

\address{\llap{1 - \,}Universidad Nacional de La Plata, Instituto
de F\'{\i}sica (IFLP-CCT-CONICET), C.C. 727, 1900 La Plata, Argentina}
\address{\llap{2 - \,}Departamento de Matem\'{a}tica -
Universidad CAECE - Buenos Aires, Argentina}
\address{\llap{3 - \,}Center Leo Apostel for Interdisciplinary Studies and, Department
of Mathematics, Brussels Free University Krijgskundestraat 33, 1160 Brussels, Belgium}
\address{\llap{4 - \,}Instituto IMAS, Buenos Aires, CONICET}

\begin{abstract}
We present a characterization of states in generalized
probabilistic models by appealing to a  non-commutative
version of geometric probability theory based on algebraic geometry
techniques. Our theoretical framework allows for  incorporation of invariant states in
a natural way.
\end{abstract}

\keywords{Quantum Probability - Quantum States - Lattice theory - Invariant
States - Algebraic Geometry - Non-commutative Measure Theory}

\thanks{This work has been supported by CONICET (Argentina).}

\maketitle

\section{Introduction}

States in physical theories must determine well defined
probabilities for any testable empirical event. In classical
theories, such as classical statistical mechanics, probabilistic
states can be represented by measurable functions on a phase space.
In this way, states of classical theories obey the axioms of
Kolmogorov for probabilities. In Kolmogorov's setting, probabilities
are considered as measures defined over a $\sigma$-algebra of
measurable subsets of a given outcome set. Kolmogorov's achievement
solved a problem posed in the year 1900 by the Mathematician David
Hilbert \cite[p. 454]{HilbertProblems-1902} (Hilbert's sixth problem)
with regard to the axiomatization of probabilities. The ensuing
solution  was given in terms of measure theory, being
suitable for a rigorous treatment of the problem.

Quantum mechanics showed that this correspondence between
states and Kolmogorovian measures is not universal.
Quantum states are represented by density operators
\cite{Holik-Ciancaglini,Holik-Zuberman-2013}, which are
very different mathematical objects. Indeed, a quantum
state, defines a Kolmogorovian probability distribution
for each empirical setup \cite{Svozil-Contexts-2009}. But
if we mix events from different and incompatible setups,
then Kolmogorov's rules for probability can be shown to
fail \cite{GudderQP}. Thus, the best we can do is to
represent a quantum state as a \emph{family} of
Kolmogorovian measures pasted in a harmonic way
\cite{Holik-NaturalInformationMeasures}. Note that
probabilities given by measurements on a single empirical
setup are not enough: we must perform measurements of
different incompatible observables in order to determine
a quantum state completely (see for example
\cite{Tomography-RevModPhys}).

This can be expressed by using other words: while the set
of states of a classical probabilistic theory forms a
simplex, the sets of quantum states exhibit a more
involved geometrical structure \cite{Bengtsson}. This
distinction acquires a very specific mathematical
expression: while classical probabilistic states can be
described as measures over Boolean lattices, quantum
states are described by measures over non-Boolean ones.
This fact is related  to the non-commutative character of
observables. The algebras involved in models of standard
quantum mechanics can differ from those used in quantum
field theory \cite{HaggLQP}, or in quantum statistical
mechanics \cite{Bratteli,Bratteli2}. While in
non-relativist quantum mechanics we only have Type I
factors, other factors can appear in a relativistic
setting and in quantum statistical mechanics
\cite{HalvorsonARQFT,RedeiSummersQuantumProbability,Bratteli}.
The algebras associated to classical systems are
commutative \cite{Bratteli}. The rigorous formulation of
quantum theory can be considered as an extension of
measure theory to the non-commutative setting
\cite{Hamhalter-QuantumMeasureTheory}. States in quantum
theories can be also characterized as non-Kolmogorovian
probabilistic measures
\cite{RedeiSummersQuantumProbability}. These formal
developments can be considered as a continuation of the
program outlined in Hilbert's sixth problem.

In many situations of interest, the task of the physicist is to find
out which is the state of the system under study, given that certain
constraints are specified. One of the most important constraints are
given by symmetries, and symmetries are usually represented
mathematically by the action of groups. Thus, the problem of
determining measures which are invariant under the action of certain
groups is an important one for physics. The problem of finding
invariant states in operator algebras was studied in the
mathematical physics literature, and  is a hard one (see for
example \cite{Bratteli}, Section $4.3$).

In this work, we study this problem in the setting of general
orthomodular lattices. Our results can be connected with those of
the above mentioned algebraic approach in a natural way, because the
algebraic structure of some of the operator algebras used in
mathematical physics is in close relation with the properties of the
lattices formed by their self-adjoint idempotent elements (i.e.,
projection operators)
\cite{KalmbachOrthomodularLatticesBook,RingsOfOperatorsI,
RingsOfOperatorsII,RingsOfOperatorsIII,RingsOfOperatorsIV}.
Applications of probabilities defined over non-Boolean structures
are not restricted to physical theories (see for example
\cite{Khrennikov-2010} and \cite{Yukalov-Sornette-2017}). The case
of measures over bounded lattices is studied separately
\cite{Massri-Holik-2017} and should not be confused with the study
of states in probabilistic theories. Our contribution is twofold

\begin{itemize}
\item From the mathematical point of view, we provide an extension
of the mathematical framework involved in the derivation of
Groemer's integral theorem in a non-Boolean setting. This could be
considered as a non-commutative generalization of geometric
probability theory based on algebraic geometry techniques.
\item We apply our extension to quantum theories represented by event
structures which can be non-Boolean in general. This allows us, via
our non-Boolean (or non-commutative) version of Groemer's theorem,
to characterize all states which are
invariant under the action of a general group $G$.
\end{itemize}

In order to find a suitable framework for incorporating conditions
imposed by the action of groups, in \cite{HolikIJGMMP} we proposed a
reformulation of the MaxEnt problem in terms of a non-Boolean
version of \emph{geometric probability theory}
\cite{Rota1998,RotaBook}. This rephrasing of the problem can be
helpful for an axiomatic approach to the study of symmetries in the
quantum setting, and thus, it can also be considered as a
continuation of the program outlined by Hilbert's sixth problem. The
present work is a natural follow up  of the ideas presented in
\cite{HolikIJGMMP}.

Our formalism could be helpful in contexts such as the ones
appearing in the algebraic formulation of quantum statistical
mechanics or quantum field theory. It is also suited for the
application of the MaxEnt method to a wide range of models, because
it is compatible with the specification of linear constraints (or
more general ones) representing mean values \cite{HolikIJGMMP}.

\begin{description}
\item[Outline of the paper]
In Section \ref{s:Preliminaries} we review  elementary notions
and  start by the description of states in physical
theories, and other mathematical preliminaries. We place emphasis on
the fact that states of classical theories can be considered as
measures over Boolean algebras, while states of quantum theories
must appeal to measures over orthomodular lattices.

In Section \ref{s:RingOfFunctions} we discuss our approach for the
particular case of complete Boolean atomistic lattices. This example is very
important, because classical probabilistic models can be described
by measures over $\sigma$-algebras which are Boolean lattices. The
constructions and methods of this Section help to illuminate the
extensions to the non-Boolean case of the rest of the paper. The
content of this Section can be considered as a physical version of geometric probability theory.

In Section \ref{s:ortho} we extend the construction of Section
\ref{s:RingOfFunctions} to the more general case of additive
orthogonal measures on orthocomplemented lattices. In this Section
we apply our general theoretical framework to the kind of measures
which appear in physical theories of interest (such as classical
statistical mechanics, non-relativistic quantum mechanics, algebraic
quantum field theory and algebraic quantum statistical mechanics).
In this way, we show how our theoretical framework is useful to
characterize invariant states. We also discuss the case of Gleason's
theorem
\cite{Hamhalter-QuantumMeasureTheory,Dvurecenskij-Gleason,Dvurecenskij-2009}.

The mathematical intuition behind our construction is that of
extending geometric probability theory to the non-Boolean case, by
appealing to algebraic geometry techniques. The physical intuition
is that of considering physical states as invariant measures (see
\cite{HolikIJGMMP} for a previous version of this approach). The
cases of interest for physics are given by orthomodular lattices and
measures valued in the ring of real numbers. But we build our
mathematical framework for the general case first, because in this
way it is easier to understand the essential mathematical structures
underlying our construction. A key feature of our theoretical framework is that
we appeal to a mathematical technique based in algebraic geometry.
Starting with an arbitrary orthocomplemented lattice $\mathcal{L}$,
we build the abelian group $\mathbb{Z}^{\oplus \mathcal{L}}$
generated by $\mathcal{L}$, formed by all possible formal sums of
its elements. By taking the ratio
$M(\mathcal{L})=\mathbb{Z}^{\oplus\mathcal{L}}/S$ with respect to a
suitably chosen subgroup $S$, we obtain a commutative group which
will be useful to characterize invariant measures\footnote{This
construction is standard in commutative algebra. See for example the
construction of the tensor product in \cite[Prop.2.12]{MR0242802}.}.
We show that a linear function from $M(\mathcal{L})$ is the same as
a measure on $\mathcal{L}$. One of our main results is given by
Theorem \ref{orth-groe}, which allows us to characterize, by means
of Corollary \ref{c:Corollary1}, all invariant measures acting by
automorphisms on the lattice. This means that, given an arbitrary
lattice endowed with a set of measures (which could represent states
of a given physical model), and an arbitrary group (which could be a
group of automorphisms representing a physical symmetry), we obtain
a canonical and constructive characterization of its invariant
measures. We end Section \ref{s:ortho} with some examples regarding
positive measures and Gleason's theorem.

Finally, in Section \ref{s:Conclusions}, we draw some conclusions.

\end{description}

\section{Preliminaries}\label{s:Preliminaries}

In this Section we introduce some elementary notions on lattice theory and states of physical theories as
measures over lattices. The reader familiarized with lattice theory,
may skip this.

\subsection{Lattice Theory}

A poset $(\mathcal{L},\leq)$ is called a \emph{lattice}
if every two elements $x,y$ have a supremum $x\vee y$ and
an infimum $x\wedge y$
\cite{KalmbachOrthomodularLatticesBook}. We denote by
$\mathbf{0}$ and $\mathbf{1}$ to the bottom and top
elements, respectively, of a bounded lattice
$(\mathcal{L},\leq,\vee,\wedge)$. A bounded lattice
$(\mathcal{L},\leq,\vee,\wedge)$ is called
\emph{complemented} if for every $x\in \mathcal{L}$ there
exists $y\in \mathcal{L}$ such that
$$x\vee y=\mathbf{1},\quad x\wedge y=\mathbf{0}.$$

\noindent A morphism between bounded lattices
$f:(\mathcal{L},\leq,\vee,\wedge,\mathbf{0},\mathbf{1})\rightarrow
(\mathcal{L}',\leq,\vee,\wedge,\mathbf{0},\mathbf{1})$ is
a poset function such that
$$f(x\wedge_\mathcal{L} y)=f(x)\wedge_{\mathcal{L}'} f(y),\quad f(x\vee_\mathcal{L} y)=f(x)\vee_{\mathcal{L}'} f(y),\quad f(\mathbf{0})=\mathbf{0} ,\quad f(\mathbf{1})=\mathbf{1}.$$

\noindent An element $z$ of a poset is an \emph{atom} if the
interval $\{\mathbf{0}\leq a\leq z\}$ consists on two elements,
\[
\{\mathbf{0}\leq a\leq z\}=\{\mathbf{0},z\}.
\]

\noindent A lattice $\mathcal{L}$ is called \emph{atomic}
if for every $x\in \mathcal{L}\setminus \mathbf{0}$,
there exists an atom $z$ such that $z\leq x$. It is said
to be \emph{atomistic} if every element can be written as
a join of atoms. A lattice $\mathcal{L}$ is called
\emph{distributive} if for every $x,y,z\in \mathcal{L}$,
\[
x\vee(y\wedge z)=(x\vee y)\wedge (x\vee z),\quad x\wedge(y\vee
z)=(x\wedge y)\vee (x\wedge z).
\]

\noindent A complemented distributive lattice is called a
\emph{Boolean} lattice. In a distributive lattice,
complements are unique. A lattice in which every
countable subset has a meet and a join is called a
\emph{$\sigma$-lattice}. The next Proposition is
going to be useful in Section \ref{s:RingOfFunctions}.

\begin{prop}\label{dif2}
If $z$ is an atom of a distributive lattice $\mathcal{L}$, then
$$z\leq a\vee b\Longleftrightarrow z\leq a\text{ or }z\leq b.$$
\end{prop}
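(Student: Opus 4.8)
The plan is to prove the non-trivial (left-to-right) implication by using distributivity to decompose $z$ itself, and then to invoke the defining property of an atom — that the interval $\{\mathbf{0}\leq c\leq z\}$ consists of exactly the two elements $\mathbf{0}$ and $z$ — to force one of the two pieces of the decomposition to be all of $z$.

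First I would dispose of the easy direction. If $z\leq a$, then $z\leq a\leq a\vee b$, and symmetrically if $z\leq b$; so ``$z\leq a$ or $z\leq b$'' implies $z\leq a\vee b$, with no use of the atom hypothesis.

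For the converse, assume $z\leq a\vee b$, so that $z=z\wedge(a\vee b)$. Applying the distributive law gives
\[ z = z\wedge(a\vee b) = (z\wedge a)\vee(z\wedge b). \]
Now $z\wedge a\leq z$ and $z\wedge b\leq z$, so both elements lie in the interval $\{\mathbf{0}\leq c\leq z\}$, which by hypothesis equals $\{\mathbf{0},z\}$. Hence each of $z\wedge a$ and $z\wedge b$ is either $\mathbf{0}$ or $z$. If $z\wedge a=z$ then $z\leq a$; if $z\wedge b=z$ then $z\leq b$; and the only remaining possibility, $z\wedge a=z\wedge b=\mathbf{0}$, would yield $z=\mathbf{0}\vee\mathbf{0}=\mathbf{0}$, contradicting the fact that the interval $\{\mathbf{0}\leq c\leq z\}$ has two \emph{distinct} elements. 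This settles the claim.

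There is essentially no hard step here; the only points to keep in mind are that an atom is by definition different from $\mathbf{0}$ (so the ``both pieces are $\mathbf{0}$'' case is genuinely excluded) and that $\wedge$ is order-preserving, so that $z\wedge a$ and $z\wedge b$ really do land inside $\{\mathbf{0}\leq c\leq z\}$. Everything else is a direct application of the distributive law.
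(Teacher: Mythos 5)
Your proof is correct and follows essentially the same route as the paper's: both apply the distributive law to $z\wedge(a\vee b)$, use the atom property to force each of $z\wedge a$, $z\wedge b$ to be $\mathbf{0}$ or $z$, and derive a contradiction (namely $z=\mathbf{0}$) in the case where both meets vanish. The only cosmetic difference is that you phrase the key step as the decomposition $z=(z\wedge a)\vee(z\wedge b)$, while the paper argues that $z\wedge(a\vee b)=\mathbf{0}$ would contradict $z=z\wedge(a\vee b)$; the substance is identical.
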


\begin{proof}
First, note that $z\wedge b$ and $z\wedge a$ are less than or equal
to $z$, hence they must be equal to $0$ or to $z$. If $z\wedge
b=0=z\wedge a$, from the distributive law, $z\wedge(a\vee b)=0$.
But, from the equation $z\leq a\vee b$, we obtain $z=z\wedge(a\vee
b)$, a contradiction. The other implication is obvious.
\end{proof}

An \emph{orthocomplementation} on a bounded lattice is a function that maps each element $a$
to a complement $a^\bot$ in such a way that the following axioms are satisfied:
\begin{itemize}
\item Complement: $a \vee a^\bot = 1$ and $a\wedge a^\bot = 0$.
\item Involution: $(a^\bot)^\bot = a$.
\item Order-reversing: if $a \leq b$ then $b^\bot \leq a^\bot$.
\end{itemize}

\noindent An \emph{orthocomplemented} lattice is a bounded lattice
which is equipped with an orthocomplementation. Orthocomplemented
lattices, like Boolean algebras, satisfy de Morgan's laws:
\[
(a\vee b)^\bot = a^\bot \wedge b^\bot,\quad (a\wedge b)^\bot = a^\bot \vee b^\bot.
\]
A morphism between orthocomplemented lattices
is a lattice map that preserves the orthocomplementation.
We say that a group $G$ acts on a orthocomplemented lattice
if there is a group map from $G$ to the group of automorphisms
of the lattice.

\

\noindent An \emph{orthomodular} lattice $\mathcal{L}$ is an
orthocomplemented lattice such that

\[
a\leq b\Longrightarrow a\vee (a^\bot\wedge b) = b,\quad\forall a,b\in \mathcal{L}.
\]

\noindent We say that $x$ is \emph{orthogonal} to $y$, denoted
$x\bot y$, if $y\leq x^\bot$.

\subsection{States as measures}\label{s:StatesAsMeasures}

A general approach to physical theories, very useful for
discussing foundational purposes, is based on von-Neumann
algebras
\cite{HalvorsonARQFT,RedeiSummersQuantumProbability}. Let
$\mathcal{B}(H)$ be the algebra of bounded linear
operators acting on a separable Hilbert space $H$. A
\emph{von Neumann algebra} is a $\ast$-subalgebra of
$\mathcal{B}(H)$ that contains the identity operator and
that is closed in the weak operator topology. The
\emph{weak operator topology} is defined as the weakest
topology such that the functional sending an operator $A$
to the complex number $\langle A\phi,\psi\rangle$, is
continuous for all vectors $\phi,\psi\in\mathcal{H}$. Due
to the celebrated von Neumann's double commutant theorem
\cite{mikloredeilibro}, von Neumann algebras can be
regarded as a $\ast$-subalgebra $\mathcal{W}$, containing
the identity operator and satisfying
$\mathcal{W}''=\mathcal{W}$, where given
$S\subseteq\mathcal{B}(H)$, $S'$ is defined as

\begin{equation}
S'=\{A\in\mathcal{B}(H)\,\,|\,\,AB-BA=0\,\forall\,B\in S\}
\end{equation}

\noindent $\mathcal{B}(H)$ is a von Neumann algebra, but
it is not the only one. For example, it is possible to
consider the algebra of observables of a classical theory
as a commutative von Neumann algebra. In algebraic
relativistic quantum field theory, the algebras
associated to space time regions will be generally the
class of von Neumann algebras known as Type III factors
\cite{Yngvason2005-TypeIIIFactors}. A similar situation
occurs in the algebraic approach to quantum statistical
mechanics, where Type II and Type III factors may appear
\cite{RedeiSummersQuantumProbability}. Thus, the approach
based on von Neumann algebras is adequate for a very
general family of physical models. A \emph{state} $\nu
:\mathcal{W}\longrightarrow\mathbb{C}$ is defined as a
continuous positive linear functional such that
$\nu(\mathbf{1})=1$ (see Reference
\cite{RedeiSummersQuantumProbability}\footnote{Throughout
this paper we assume that $\mathcal{W}$ is unital, i.e.,
that it possesses an identity element $\mathbf{1}$}).
Positivity means that $\nu(A^{\ast} A)\geq 0$ for all
$A\in\mathcal{W}$ or, equivalently, that $\nu(A)\geq 0$
for all $A\geq 0$, see \cite[\S 2.1]{MR866671}.
\emph{Normal states} can be defined as those states
satisfying the condition
$\nu(\sup_{\alpha}(A_{\alpha}))=\sup_{\alpha}\nu(A_{\alpha})$
for any uniformly bounded increasing net ${A_{\alpha}}$
of positive elements of $\mathcal{W}$ (equivalently,
states satisfying $\nu(\sum_{i\in I}E_{i}) = \sum_{i\in
I}\nu(E_{i})$ for any countable and pairwise orthogonal
family of projection operators\footnote{Here, a
\emph{projection operator} is understood as a self
adjoint and idempotent element of the algebra.
\emph{Elementary tests} to be performed on the system
involved, also known as \emph{events}, are represented by
projection operators (c.f. \cite{mikloredeilibro}).}
$\{E_{i}\}_{i\in I}$). Normal states are real-valued for
self-adjoint elements of $\mathcal{W}$ (recall that $A$
is self adjoint iff $ A^{\ast}=A$) and they form a convex
set. It can be shown that the set of projection operators
associated to an arbitrary von Neumann algebra
$\mathcal{W}$ is an orthomodular lattice
$\mathcal{L}(\mathcal{W})$ \cite{mikloredeilibro}. Thus,
normal states of physical theories define probabilities
on orthomodular lattices satisfying the following
properties:

\begin{subequations}\label{e:GeneralizedProbability}
\begin{equation} \label{e:GenProb1}
    \nu:\ \mathcal{L} \rightarrow [0,1],
\end{equation}
such that
\begin{equation}
    \nu(\mathbf{1})=1, \\
\end{equation}
and, for a denumerable and pairwise orthogonal family of events
$\{E_{i}\}_{i\in I}$,
\begin{equation} \label{e:GenProb3}
    \nu(\sum_{i\in I}E_{i}) = \sum_{i\in I}\nu(E_{i}) \,.
\end{equation}
\end{subequations}

\noindent where $\mathcal{L}$ is the lattice associated to a
particular algebra. In standard quantum mechanics $\mathcal{L}$ is
the lattice of projection operators on a Hilbert space and Gleason's
theorem ensures that it defines a density operator. In classical
mechanics $\mathcal{L}$ is the Boolean lattice of measurable subsets
of phase space. In algebraic relativistic quantum field theory, each
local region and a global state will define a lattice (associated to
a Type III factor) satisfying the above axioms. Notice that Eqs.
\ref{e:GeneralizedProbability} make sense for mathematical objects
more general than orthomodular lattices \cite{GudderQP}. For
example, one may use orthocomplemented lattices, or more generally,
orthocomplemented posets (which are not necessarily lattices)
\cite{GudderQP}. Another important remark is that if, instead of
condition \ref{e:GenProb3}, we only impose additivity, we will
obtain the more general notion of \emph{additive probability}:

\begin{subequations}\label{e:GeneralizedProbability2}
\begin{equation} \label{e:GenProb12}
    \nu:\ \mathcal{L} \rightarrow [0,1],
\end{equation}
such that
\begin{equation}
    \nu(\mathbf{1})=1, \\
\end{equation}
and, for orthogonal elements $E_{1}$ and $E_{2}$,
\begin{equation} \label{e:GenProb32}
    \nu(E_{1}\vee E_{2}) = \nu(E_{1})+\nu(E_{2}) \,.
\end{equation}
\end{subequations}

\noindent In this work, we will refer to probabilities defined by
Eqs. \ref{e:GeneralizedProbability} and
\ref{e:GeneralizedProbability2} as \emph{$\sigma$-additive} and
\emph{additive}, respectively. In fact, we will change the codomain
of $\nu$ in conditions (\ref{e:GeneralizedProbability}a) and
(\ref{e:GeneralizedProbability2}a) in order to work with
\emph{measures}. See
\cite{Hamhalter-QuantumMeasureTheory,Dvurecenskij-1985} for
generalizations of Gleason's theorem.

It is  important to notice that the Boolean vs.
non-Boolean distinction is crucial here: while states of
classical theories are measures over Boolean lattices
(and thus, they obey Kolmogorov's axioms), states of
non-classical theories cannot be equated to Kolmogorovian
measures. Furthermore, all quantal theories ``suffer" of
contextuality: the Kochen-Specker theorem can be
generalized to arbitrary von Neumann algebras
\cite{Doring2005}. This crucial distinction allows one to
see that quantum theories define a new form of measure
theory, much more general than the one used in
Kolmogorov's axioms
\cite{RedeiSummersQuantumProbability,Hamhalter-QuantumMeasureTheory}.
In particular, as we show in our work, the whole
theoretical framework must be adapted in order to deal
with invariant measures representing physical symmetries.

A measure defined in an orthomodular lattice satisfying
Eqns. \ref{e:GeneralizedProbability} can be considered as
a family of Kolmogorovian measures: if the lattice is
non-Boolean, this family has only one member (the Boolean
lattice itself). But if the lattice is not Boolean, it is
possible to show that an orthomodular lattice can be
considered as a pasting of maximal Boolean subalgebras
\cite{navara1991pasting}, and a state will define a
Kolmogorovian measure when restricted to one of them
\cite{Holik-NaturalInformationMeasures,Svozil-Contexts-2009}.

The following Definition formalizes the idea of a state which
possesses a symmetry represented by the action of a group:

\begin{defs}
Given an algebra $\mathcal{W}$ and a group $G$, denote the action of
$G$ by $A\in\mathcal{W}$ as $\tau_{g}(A)\in\mathcal{W}$ for all $g
\in G$. A state $\nu$ is said to be $G$-invariant if
$\nu(\tau_{g}(A))=\nu(A)$ for all $A\in\mathcal{W}$ and for all
$g\in G$.
\end{defs}

\noindent Notice that if $G$ is represented by automorphisms in the
lattice of projection operators, similar definitions can be made for
states defined as measures in Eqns. \ref{e:GeneralizedProbability}
and \ref{e:GeneralizedProbability2}.

\

Suppose now that the group $G$ acts on an orthomodular
lattice $\mathcal{L}$ by automorphisms. Denote by $\alpha_{g}$ to
the automorphism corresponding to $g\in G$. Thus, a measure $\nu$
satisfying Equations \ref{e:GeneralizedProbability}, is said to be
\emph{invariant} under the action of $G$ if $\nu(\alpha_{g}x)=\nu(x)$, for
all $x\in\mathcal{L}$ and all $g\in G$, see \cite{Varadaran}.

\section{The ring of functions on a complete Boolean atomistic lattice}\label{s:RingOfFunctions}

In this Section, we restrict ourselves to complete Boolean atomistic lattices in order to
illustrate on some properties that will be extended later to the non-Boolean
case. Let us  start by recalling the
description of a classical statistical system. States of a classical
system will be described by points in a measurable phase space
$\Gamma$. Observables are described by functions
$f:\Gamma\longrightarrow \mathbb{R}$ such that for every measurable
set $E\subseteq\mathbb{R}$, $f^{-1}(E)$ is measurable in $\Gamma$,
usually $\Gamma=\mathbb{R}^{2N}$ endowed with the Lebesgue measure.
The appeal to measure theory allows one to build well defined
probabilistic states: testable events will be defined as measurable
subsets of $\Gamma$. As it is well known, the measurable subsets of
$\Gamma$ form a Boolean lattice (thus, states can be defined as in
Eqns. \ref{e:GeneralizedProbability}, with $\mathcal{L}$ equal to the
lattice of measurable subsets of $\Gamma$).

Let us show how to assign a ring to a
complete Boolean atomistic lattice. Let $\mathcal{B}$ be
a complete Boolean atomistic lattice and let
$\mathcal{A}(\mathcal{B})$ be the set of its atoms. For
every $x\in\mathcal{B}$, we can consider the
\emph{indicator function} of $x$,
$I_x:\mathcal{A}(\mathcal{B})\rightarrow\mathbb{R}$ as

\begin{equation}
I_x(z)=
\begin{cases}
1&\text{if } z\leq x,\\
0&\text{if not.}
\end{cases}
\end{equation}

\noindent Using the properties of
$\mathcal{B}$ as an atomistic lattice and Prop.
\ref{dif2} of Section \ref{s:Preliminaries}, it is easy
to see that the indicator functions satisfy for all
$x,y,x_1,\ldots,x_n\in \mathcal{B}$,

\begin{eqnarray}\label{e:IndicatorProperties}
&I_xI_y=I_{x\wedge y}&\nonumber\\ &I_{x\vee y}+I_{x\wedge
y}=I_x+I_y&\nonumber\\ &I_{x_1\vee\ldots\vee
x_n}=1-(1-I_{x_1})\ldots(1-I_{x_n})
\end{eqnarray}

\bigskip

\noindent Let us call $Fun(\mathcal{A}(\mathcal{B}))$ the
algebra of functions from $\mathcal{A}(\mathcal{B})$ to
$\mathbb{R}$. The algebra operations are
given by pointwise addition and multiplication. The real
numbers act by scalar multiplication. Inside
$Fun(\mathcal{A}(\mathcal{B}))$ we have the subalgebra $Sim(\mathcal{A}(\mathcal{B}))$ of
simple functions generated by the indicators (i.e., all
possible finite combinations of the form $\sum_{i=1}^n
\alpha_i I_{x_i}$). Notice that we can
assign to every $x\in\mathcal{B}$ the indicator function
of its atoms,
\[
\mathcal{B}\to Sim(\mathcal{A}(\mathcal{B})),\quad
x\mapsto I_x:=I_{\{z\in\mathcal{A}(\mathcal{B})\,\colon\,z\leq x\}}.
\]
Given that $\mathcal{B}$ is atomistic, this assignment is injective and by
Prop. \ref{dif2}, $I_{x\vee y}=I_{\{z\leq x\}\cup \{z\leq y\}}$.
From the completeness of $\mathcal{B}$ is follows that any
indicator function $I_A$ is equal to $I_x$, $x=\bigvee_{z\in A} z$.

\

\noindent Notice that a linear functional $\tilde{\nu}$ from
$Sim(\mathcal{A}(\mathcal{B}))$ to $\mathbb{R}$ must satisfy

\begin{equation}\label{e:NuMono}
\tilde{\nu}(I_{x\vee y})+\tilde{\nu}(I_{x\wedge
y})=\tilde{\nu}(I_x)+\tilde{\nu}(I_y)
\end{equation}

\bigskip

\noindent An important idea is at stake here. Given a Boolean
lattice $\mathcal{B}$, we say that a function
$\nu:\mathcal{B}\rightarrow \mathbb{R}$ is an \emph{additive
measure}\footnote{Compare with the general definition given by Eqns.
\ref{e:GeneralizedProbability2}. The main difference relies on the
fact that probabilistic measures such as the ones defined by
\ref{e:GeneralizedProbability2} are normalized (its range is
restricted to the interval $[0,1]$), while this last condition is
not assumed for additive measures.} (or just a \emph{measure}) if
given $x,y\in\mathcal{B}$,

\[
\nu(x\vee y)+\nu(x\wedge y)=\nu(x)+\nu(y).
\]

\noindent The measure $\nu$ is called \emph{bounded} if it has a bounded
image, \emph{positive} if it takes
positive values. It is called a \emph{probability function} if its
image is the interval $[0,1]$ (and thus, it is a particular case of
Eqs. \ref{e:GeneralizedProbability2}, for $\mathcal{L}$ a Boolean
lattice). As it is well known, the above definition of measure can be
derived from the following axiom,

\[
\nu(x\vee y)=\nu(x)+\nu(y),\quad \forall x\bot y.
\]

\noindent The advantage of this axiom is that it can be used to
extend the definition of additive measure to any orthocomplemented
lattice $\mathcal{L}$. If we instead impose the condition

\[
\nu(\bigvee_{i=0}^\infty x_{i})=\sum_{i=0}^\infty \nu(x_{i}),
\quad x_i\bot x_j,\,\forall i\neq j,
\]

\noindent we say that $\nu$ is a \emph{$\sigma$-additive measure} on
$\mathcal{L}$\footnote{Again, compare with the general Definition on
arbitrary orthomodular lattices given by Eqs.
\ref{e:GeneralizedProbability}.}.

\bigskip

\noindent

Let us go back to our previous constructions.
We noticed that a linear functional $\tilde{\nu}$ from
$Sim(\mathcal{A}(\mathcal{B}))$ to $\mathbb{R}$ can be viewed
(by restriction to $\mathcal{B}$)
as a measure on the complete Boolean atomistic lattice $\mathcal{B}$. This
connection is provided by Eq. \ref{e:NuMono}. Indeed, given a
function $\tilde{\nu}$ defined on
$Sim(\mathcal{A}(\mathcal{B}))$, we obtain a measure on
$\mathcal{B}$ by defining $\nu(x)=\tilde{\nu}(I_{x})$, for all
$x\in\mathcal{B}$. Eq. \ref{e:NuMono} guarantees that $\nu$ satisfies
the axioms of an additive measure.
Analogously, a measure $\nu$ on $\mathcal{B}$ determines a
linear functional $\tilde{\nu}$ from
$Sim(\mathcal{A}(\mathcal{B}))$
to $\mathbb{R}$ by defining $\tilde{\nu}(I_{x})=\nu(x)$.
By Eqs. \ref{e:IndicatorProperties} we obtain the desired functional.
Specifically, we proved a bijection,
\[
\phi:\{\nu:\mathcal{B}\to\mathbb{R}\,\colon\,
\nu(x\vee y)+\nu(x\wedge y)=\nu(x)+\nu(y)\}
\to
\{\tilde{\nu}:Sim(\mathcal{A}(\mathcal{B}))\to\mathbb{R}\,\colon\,
\tilde{\nu}\text{ linear}\}.
\]
If $\phi(\nu)=\phi(\nu')$,
then $\phi(\nu)(I_x)=\phi(\nu')(I_x)$ for all $x\in\mathcal{B}$.
Hence, $\nu=\nu'$. Given $\tilde{\nu}$, set $\nu=\tilde{\nu}|_{\mathcal{B}}$.
Then, $\phi(\nu)=\tilde{\nu}$.
This mathematical fact
has a simple physical explanation (when probabilities are
considered): the probabilities are determined by their values in
elementary cases. The simpler functional is perhaps the
evaluation map, $ev_z(I_x)=I_x(z)$, which corresponds to measuring the
property $z$.

\

If a group $G$ acts\footnote{
The action is defined as a group map $G\to\text{Aut}(\mathcal{B})$,
where $\text{Aut}(\mathcal{B})$ is the set (group) of
bijective morphisms of complete Boolean atomistic lattices, $\mathcal{B}\to\mathcal{B}$,
that is, preserving $0,1,\wedge,\vee$ and the atoms.
}
on $\mathcal{B}$,
it follows that an invariant
measure can be made
equivalent to an invariant linear functional
$Sim(\mathcal{A}(\mathcal{B}))\rightarrow\mathbb{R}$.
Indeed, given an
invariant measure $\nu:\mathcal{B}\to\mathbb{R}$, we define
$\tilde{\nu}:Sim(\mathcal{A}(\mathcal{B}))\to
\mathbb{R}$ as $\tilde{\nu}(I_{x})=\nu(x)$.
It is easy to prove that the action induces a linear action
of $G$ in $Sim(\mathcal{A}(\mathcal{B}))$ by
$g\cdot I_x=I_{g\cdot x}$. Hence,
$\tilde{\nu}$ is invariant if and only if $\nu$ is invariant.

\

Let us characterize $\sigma$-additive measures on
$\mathcal{B}$. Denote by
$Fun^b(\mathcal{A}(\mathcal{B}))$ to the space of bounded
functions with respect to the supremum norm
$\|f\|=\text{sup}_{\mathcal{A}(\mathcal{B})}|f(x)|$. It
is a non-separable metric space and $\|I_x\|=1$ for all
$x\in \mathcal{B}$. Let us denote by
$Sim(\mathcal{A}(\mathcal{B}))^{\ast}$ to the space of
continuous linear functionals on
$Sim(\mathcal{A}(\mathcal{B}))$.

Let $\nu$ be a $\sigma$-additive measure. Then, $\tilde{\nu}$ defines
a linear functional $Sim(\mathcal{A}(\mathcal{B}))\to\mathbb{R}$.
Let $I_{x_n}$ be a convergent sequence of indicator functions such that
$\lim_n I_{x_n}=I_{x}$. Then,
there exists an orthogonal sequence $y_n\in\mathcal{B}$ such that $x_n=y_1\vee\ldots\vee y_n$.
Then,
\[
x_n=y_1\vee\ldots\vee y_n\Longrightarrow
\tilde{\nu}(I_{x_n})=
\tilde{\nu}(I_{y_1\vee\ldots\vee y_n})=
\sum_{i=1}^n\tilde{\nu}(I_{y_i})=
\sum_{i=1}^n\nu(y_i)
\Longrightarrow
\]
\[
\lim_n\tilde{\nu}(I_{x_n})=
\lim_n\sum_{i=1}^{n}\nu(y_i)=
\nu(\lim_n\bigvee_{i=1}^{n}y_i)=
\tilde{\nu}(\lim_n I_{x_n})=
\tilde{\nu}(I_x).
\]
Analogously, if $\tilde{\nu}$ is linear continuous,
take $\{y_i\}$ an orthogonal sequence, then
\[
\nu(\bigvee_{i=1}^\infty y_i)=
\tilde{\nu}(\lim_nI_{\bigvee_{i=1}^n y_i})=
\tilde{\nu}(\lim_n\sum_{i=1}^n I_{y_i})=
\lim_n\sum_{i=1}^n \tilde{\nu}(I_{y_i})=
\sum_{i=1}^\infty\nu(y_i).
\]
Summing up, we have a bijection between
$\sigma$-additive measures on $\mathcal{B}$
and $Sim(\mathcal{A}(\mathcal{B}))^{\ast}$.
Using the same arguments as before, there exists a bijection between
$\sigma$-additive invariant
measures on $\mathcal{B}$ and invariant continuous linear functionals
on $Sim(\mathcal{A}(\mathcal{B}))$.

\subsection{Generating sets}

\noindent It is the case that many
lattices of interest in physics exhibit the property of being
atomistic: each element can be expressed as the join of atoms.
Examples of lattices with this property are points of phase space for a
classical system and the set of one dimensional projections in
standard quantum mechanics. This is a very useful property, but in
the general case, lattices can be non-atomic (and thus,
non-atomistic), for example, the continuous geometries studied by
von Neumann \cite{KalmbachOrthomodularLatticesBook}. In order to
generalize the constructions presented in the previous Section, the
following definitions will be useful:

\begin{defs}
Let $\mathcal{B}$ be a Boolean lattice and let $B\subseteq
\mathcal{B}$ be a subset closed under finite meets. We say that $B$
is a
\begin{itemize}
\item \emph{generating set} if every $x\in \mathcal{B}$ satisfies
$x=b_1\vee\ldots\vee b_k$ for some $b_1,\ldots,b_k\in B$.
\item $\sigma$-\emph{generating set} if every $x\in \mathcal{B}$ is equal
to $x=\bigvee b_i$ for some denumerable subset $\{b_i\}\subseteq B$.
\end{itemize}
\end{defs}

\noindent Before we continue, it is important to make the following
remark. We can still characterize measures
on a lattice (atomic or not) as measures on a manifold by using a suitably chosen
generating set. This is the power of Groemer's integral theorem:

\begin{theorem}[Groemer's integral theorem]
Let $\mathcal{B}$ be a distributive $\sigma$-lattice and let $B$ be
a $\sigma$-generating set for $\mathcal{B}$. Let
$\nu:B\rightarrow\mathbb{R}$ be a function satisfying $\nu(0)=0$ and
$\nu(b_1\vee b_2)+\nu(b_1\wedge b_2)= \nu(b_1)+\nu(b_2)$ for all
$b_{1},b_{2},b_{1}\vee b_{2}\in B$. Then, the following statements
are equivalent

\begin{enumerate}
\item $\nu$ extends uniquely to a measure on $\mathcal{B}$.
\item $\nu$ satisfies the inclusion-exclusion identities
\[
\nu(b_1\vee\ldots\vee b_k)=\sum_{i}\nu(b_i)
-\sum_{i<j}^k\nu(b_i\wedge b_j)+\ldots
,\quad\forall b_1,\ldots,b_k,b_1\vee\ldots\vee b_k\in B,\, k\geq 2.
\]
\end{enumerate}
\end{theorem}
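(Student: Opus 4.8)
The plan is to prove the two implications separately, the substantive one being $(2)\Rightarrow(1)$. Throughout, for a finite family write $E(b_1,\dots,b_k):=\sum_{\emptyset\neq S\subseteq\{1,\dots,k\}}(-1)^{|S|+1}\nu\big(\bigwedge_{i\in S}b_i\big)$ for the inclusion--exclusion expression; this makes sense as soon as every submeet $\bigwedge_{i\in S}b_i$ lies in $B$, and $E$ is visibly invariant under permutations and repetitions of its arguments, with $E(b)=\nu(b)$.

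For $(1)\Rightarrow(2)$ I would simply observe that any valuation on a distributive lattice satisfies the inclusion--exclusion identities, and restrict. If $\bar\nu$ extends $\nu$ to all of $\mathcal{B}$, then for any $b_1,\dots,b_k\in\mathcal{B}$ one proves $\bar\nu(b_1\vee\dots\vee b_k)=E(b_1,\dots,b_k)$ by induction on $k$: the case $k=2$ is the defining identity $\bar\nu(x\vee y)+\bar\nu(x\wedge y)=\bar\nu(x)+\bar\nu(y)$; for the step, set $u=b_1\vee\dots\vee b_k$, apply the $k=2$ identity to $u$ and $b_{k+1}$, use distributivity to rewrite $u\wedge b_{k+1}=\bigvee_{i\le k}(b_i\wedge b_{k+1})$, and invoke the inductive hypothesis for $\bar\nu(u)$ and $\bar\nu(u\wedge b_{k+1})$; collecting terms yields $E(b_1,\dots,b_{k+1})$. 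This is routine.

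For $(2)\Rightarrow(1)$, let $\mathcal{B}_0\subseteq\mathcal{B}$ be the sublattice generated by $B$. Since $B$ is closed under finite meets, distributivity shows every element of $\mathcal{B}_0$ is a finite join $b_1\vee\dots\vee b_k$ of elements of $B$, and $\mathcal{B}_0$ is meet-closed via $(\bigvee b_i)\wedge(\bigvee c_j)=\bigvee_{i,j}(b_i\wedge c_j)$. I would define the candidate extension on $\mathcal{B}_0$ by $\bar\nu(b_1\vee\dots\vee b_k):=E(b_1,\dots,b_k)$ and $\bar\nu(0):=0$. The hard part — the heart of the whole argument — is well-definedness: two finite-join representations of the same element must give the same value of $E$. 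I would reduce this to the sublemma
\[
E(b_1,\dots,b_k,b)=E(b_1,\dots,b_k)\qquad\text{whenever }b\le b_1\vee\dots\vee b_k,\ \ b,b_1,\dots,b_k\in B.
\]
To prove it, split the sum defining $E(b_1,\dots,b_k,b)$ according to whether the extra index occurs, getting $E(b_1,\dots,b_k,b)=E(b_1,\dots,b_k)+\nu(b)-E(b\wedge b_1,\dots,b\wedge b_k)$; now $\bigvee_i(b\wedge b_i)=b\wedge\bigvee_i b_i=b$ by distributivity, this join and all its submeets $b\wedge\bigwedge_{i\in S}b_i$ lie in $B$, so hypothesis $(2)$ applied to the family $(b\wedge b_i)_i$ gives $E(b\wedge b_1,\dots,b\wedge b_k)=\nu(b)$ (with $\nu(0)=0$ absorbing degenerate submeets), and the last two terms cancel. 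Given the sublemma, if $b_1\vee\dots\vee b_k=c_1\vee\dots\vee c_m$ one adjoins the $c_j$ one at a time and then deletes the $b_i$ one at a time — each step legitimate since every $c_j$ and every $b_i$ lies below the common join — to conclude $E(b_1,\dots,b_k)=E(c_1,\dots,c_m)$. Hence $\bar\nu$ is well defined on $\mathcal{B}_0$, restricts to $\nu$ on $B$ (case $k=1$), and is a valuation on $\mathcal{B}_0$: for $x=\bigvee b_i$, $y=\bigvee c_j$ we have $\bar\nu(x\vee y)=E(\{b_i\}\cup\{c_j\})$ and $\bar\nu(x\wedge y)=E(\{b_i\wedge c_j\})$, and $\bar\nu(x\vee y)+\bar\nu(x\wedge y)=\bar\nu(x)+\bar\nu(y)$ follows by the same alternating-sum bookkeeping as in the sublemma, peeling the generators of $y$ off one by one.

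It remains to pass from $\mathcal{B}_0$ to $\mathcal{B}$, which is where $B$ being a $\sigma$-generating set enters: any $x\in\mathcal{B}$ equals $\bigvee_{i\ge1}b_i$ with $b_i\in B$, so $x_n:=b_1\vee\dots\vee b_n\in\mathcal{B}_0$ increases to $x$, and one sets $\bar\nu(x):=\lim_n\bar\nu(x_n)$. The points to check are that this limit exists, is independent of the chosen sequence (interleave two increasing sequences with the same join and use the valuation identity on $\mathcal{B}_0$), and that the resulting $\bar\nu$ is again a measure on $\mathcal{B}$; this amounts to the continuity from below ($\sigma$-additivity) that accompanies the notion of a measure on a distributive $\sigma$-lattice, and requires care rather than new ideas. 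Uniqueness is then immediate: any measure on $\mathcal{B}$ extending $\nu$ is a valuation on the distributive lattice $\mathcal{B}_0$ agreeing with $\nu$ on $B$, hence coincides with $E=\bar\nu$ on $\mathcal{B}_0$ by the direction $(1)\Rightarrow(2)$, and continuity from below then forces agreement on all of $\mathcal{B}$. I expect the inclusion--exclusion well-definedness on $\mathcal{B}_0$ to be the only genuinely delicate step.
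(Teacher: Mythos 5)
Your finite-join core is correct: the sublemma $E(b_1,\dots,b_k,b)=E(b_1,\dots,b_k)$ for $b\le b_1\vee\dots\vee b_k$, obtained by splitting off the extra index and applying hypothesis (2) to the family $(b\wedge b_1,\dots,b\wedge b_k)$ (legitimate since $B$ is meet-closed and $\bigvee_i(b\wedge b_i)=b\in B$), does give well-definedness of $\bar\nu$ on the sublattice $\mathcal{B}_0$ of finite joins of elements of $B$, and the valuation identity on $\mathcal{B}_0$ follows by the induction you sketch. Note that this is a different route from the proof the paper actually invokes (it cites Klain--Rota, Th.\ 2.2.1, and Groemer): there one works with the linear span of indicator functions and shows that condition (2) is precisely what is needed for $I_b\mapsto\nu(b)$ to respect all linear relations among indicators, hence to extend to a linear functional on simple functions --- the ``integral'' of the theorem's name, and the same machinery the paper sets up in Section 3 and generalizes in Section 4. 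Your direct combinatorial bookkeeping is a legitimate substitute for that part.

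The genuine gap is the passage from $\mathcal{B}_0$ to $\mathcal{B}$. Hypothesis (2) is purely finitary and $\nu$ is not assumed positive or bounded, so nothing guarantees that $\lim_n\bar\nu(x_n)$ exists; the interleaving argument for independence of the chosen sequence needs a monotonicity or continuity you do not have; and the uniqueness claim tacitly assumes that a ``measure'' on a distributive $\sigma$-lattice is continuous from below, whereas the paper's definition of measure is only the finite valuation identity. Concretely, take $\mathcal{B}=2^{\mathbb{N}}$, $B$ the finite subsets, and $\nu(A)=\sum_{n\in A}(-1)^n n$: then $\nu(0)=0$ and (2) holds (it is a genuine valuation on the lattice of finite sets), yet $\bar\nu(\{1,\dots,n\})$ oscillates unboundedly, so no limit-defined extension exists; on the other hand finitely additive extensions to all of $2^{\mathbb{N}}$ do exist (extend the linear functional from the span of the $I_A$ with $A$ finite to the span of all indicators by choosing a linear complement) but are not unique, since $I_{\mathbb{N}}$ is not a finite combination of indicators of finite sets. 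So the limiting step cannot be completed ``with care'' as written: either the theorem is to be read, as in the cited sources, with $B$ generating $\mathcal{B}$ under finite joins (in which case your $\mathcal{B}_0$ argument already finishes the proof), or additional hypotheses such as positivity and $\sigma$-additivity must be imported before any limit argument, existence or uniqueness, can be made.
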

\begin{proof}
See \cite[Th.2.2.1]{MR1670882}
or \cite[Th.1]{MR0513905}.
\end{proof}

\noindent Although Groemer's Theorem yields a strong result, we want to
extend it in two directions. The first one is to a situations in which  a group is
acting and the second is to orthomodular lattices. To
illustrate these ideas, let us consider the following examples. The
first one was extracted from \cite{MR1670882}.

\begin{example}
A polyconvex set in $\mathbb{R}^n$ is a finite union of compact
convex sets of dimension $n$. Union and intersections of polyconvex
sets are polyconvex sets, hence the family of polyconvex sets is a
non-atomic Boolean lattice. In this lattice, the set of
parallelotopes $[0,x_1]\times\ldots\times[0,x_n]$
combined with the group of Euclidean motions (translations and
rotations) is a $\sigma$-generating set.
Let us denote the Euclidean group as $G$
and $B$ to the set of parallelotopes of the form $[0,x_1]\times\ldots\times[0,x_n]$,
$B\cong\mathbb{R}^n$.
Notice that the
the \emph{normalizer} subgroup of $B$,
$N_G(B)=\{g\in G\,\colon\,gB=B\}$,
is the group of permutations $S_n$,
\[
g\cdot [0,x_1]\times\ldots\times[0,x_n]= [0,x_{g\cdot
1}]\times\ldots\times[0,x_{g\cdot n}]\in B,\quad g\in S_n.
\]
\noindent According to \cite[Ch.5, Ch.4]{MR1670882}, there is a
bijection between $G$-invariant measures and $S_n$-invariant
functions $B=\mathbb{R}^n\rightarrow\mathbb{R}$ satisfying the
inclusion-exclusion identities.
\end{example}

In the next Subsection we extend the above constructions over
orthomodular lattices (under the action of more general groups). The
following example illustrates what happens in a non-distributive
lattice:

\begin{example}
If $\mathcal{L}$ is the (non-Boolean)
lattice of subspaces of $\mathbb{R}^{n}$, the manifold of atoms is
given by $\mathbb{P}^{n-1}$ and the general linear group $G$ acts
on $\mathcal{L}$. Fix a line $\langle v\rangle$.
We prove in the next Section that any
$G$-invariant measure $\nu$ on $\mathcal{L}$ is determined by
assigning a real number
$\alpha\in\mathbb{R}$ to the line $\langle v\rangle$.
In fact, the value of $\nu$ over a $k$-dimensional subspace
$L=\langle v_1,\ldots,v_k\rangle$ is $k\alpha$,
\[\nu(L)=\nu(\langle v_1\rangle)+\ldots+\nu(\langle v_k\rangle)=k\alpha.\]
Then, invariant measures over this lattice
are essentially the dimension functions.
\end{example}

\section{Invariant measures on arbitrary orthocomplemented lattices.}\label{s:ortho}

Let us now extend the results of the previous Section to more
general lattices. As seen in Section
\ref{s:StatesAsMeasures}, the need of extending the approach to a
more general setting is related to the fact that quantum theories
are based on non-commutative algebras for which the event structures
are non-distributive (and thus, non-Boolean). The category of
orthomodular lattices is large enough to include many physical
examples of interest \cite{kalm83}.

\subsection{Measures over orthocomplemented lattices}

To understand the set of measures on a particular lattice in the
light of our constructions, it is more convenient to use category
theory. We introduce the notion of a measure with values in an abelian group $A$.
Next, we show that this defines a functor (a type of mapping between categories) that is representable,
i.e., there is a group \emph{parameterizing}, the set of measures.

\begin{defs}
Let $\mathcal{L}$ be an orthocomplemented lattice and let $A$ be an  abelian group.
\begin{itemize}
\item An \emph{additive measure} on $\mathcal{L}$ with values in $A$ is a function
$\nu:\mathcal{L}\rightarrow A$ satisfying
\[
\nu(x\vee y)=\nu(x)+\nu(y),\quad\text{if }x\bot y.
\]
\item
If $A$ is complete, a \emph{$\sigma$-additive measure} on
$\mathcal{L}$ with values in $A$ is a function
$\nu:\mathcal{L}\rightarrow A$ satisfying
\[
\nu(\bigvee x_i)=\sum_{i=1}^\infty\nu(x_i),\quad\text{if }x_i\bot x_j,\forall i\neq j\in\mathbb{N}.
\]
where the term in the right is a convergent series in $A$.
\end{itemize}

\noindent It is important to remark that in Eqs.
\ref{e:GeneralizedProbability} and
\ref{e:GeneralizedProbability2} of Section
\ref{s:StatesAsMeasures}, i) the abelian group is taken
to be the set of  real numbers, ii) the lattices are
considered orthomodular, and iii) the measures are
normalized to the interval $[0,1]$ (this  is the
environment of interest for describing states in physical
theories).

Let us denote by $\mathcal{M}(\mathcal{L};A)$  the space
of measures with values in $A$. If a group $G$ acts by
lattice automorphisms in $\mathcal{L}$, an
\emph{invariant measure} is a measure $\nu$ on
$\mathcal{L}$ such that $\nu(g\cdot x)=\nu(x)$ for all
$x\in\mathcal{L}$ and $g\in G$. The space of invariant
measures is denoted $\mathcal{M}(\mathcal{L};A,G)$. By
definition
$\mathcal{M}(\mathcal{L};A)=\mathcal{M}(\mathcal{L};A,0)$,
where $0$ is the trivial group. We always assume that the
action of $G$ in $A$ is trivial.
\end{defs}

\

\begin{remark}
In what follows, we are going to use some results and constructions
from category theory and commutative algebra. First, let us denote
by $\mathbf{Ab}$  the category of abelian groups. We say that a
functor $F:\mathbf{Ab}\to\mathbf{Ab}$ is \emph{representable} if
there exists $M\in \mathbf{Ab}$ such that
$F(A)=\text{Hom}_{\mathbb{Z}}(M,A)$. Representable functors are an
important tool  nowadays in mathematics. This is due to the fact that
they allow us to represent structures taken from an abstract (or not
completely understood) category in terms of sets (or abelian groups) and functions.
Usually, these notions are easier to handle, allowing for i) a better
understanding of the original structures, and eventually, ii) to the
application of known techniques to more abstract
domains.

Secondly, we are going to use the $\otimes$-$\text{Hom}$ adjunction
that will be important in what follows.
Let $K_1,K_2$ be two commutative rings and let $M$ a $K_1$-module,
$N$ a $K_2$-module, and $T$ a $(K_1,K_2)$-bimodule. Then, there
exists an isomorphism of abelian groups

\[
\text{Hom}_{K_1}(N\otimes_{K_2} T,M)\cong
\text{Hom}_{K_2}(N,\text{Hom}_{K_1}(T,M)).
\]

Third, given a set $L$ we can construct the \emph{free abelian
group generated by} $L$, $\mathbb{Z}^{\oplus L}$. It is given by all
finite formal sums of elements of $L$, $\{\sum_{i=1}^k l_i\}$.
Another (equivalent) characterization is given by the set of all
functions with finite support $L\to \mathbb{Z}$. The function
associated to an element $l\in L$ is the indicator function
$I_{\{l\}}$. We show below that the group $\mathbb{Z}^{\oplus L}$
has the following \emph{universal property}: any function $\nu:L\to
A$ extends uniquely to a $\mathbb{Z}$-linear map $\overline{\nu}:\mathbb{Z}^{\oplus
L}\to A$:

\[
\xymatrix{
L\ar[r]^{\forall\nu}\ar[d]_{\pi}
\ar@{}[dr]|<<<<\circlearrowleft&A\\
\mathbb{Z}^{\oplus L}\ar@{-->}[ur]_{\exists!\overline{\nu}}&
}
\]

Finally, given an abelian group $M$ where a group $G$
acts, we can construct the module of invariants $M^G$ and the module
of coinvariants $M_G$,
\[
M^G=\{m\in M\,\colon\, g\cdot m=m,\,\forall g\in G\},\quad
M_G=M/R.
\]
where $R$ is the submodule generated by
$\{g\cdot m-m\,\colon\,g\in G,\,m\in M\}$. The module $M_G$ has the
following universal property: for every linear map $\nu:M\to A$,
where $G$ acts trivially on $A$, there exists a unique factorization
$\overline{\nu}:M_G\to A$,
\[
\xymatrix{
M\ar[r]^{\forall\nu}\ar[d]_{\pi}
\ar@{}[dr]|<<<<\circlearrowleft&A\\
M_G\ar@{-->}[ur]_{\exists!\overline{\nu}}&
}
\]

\end{remark}

Now we apply all these notions to study different measures of
interest on orthocomplemented lattices:

\begin{prop}\label{rep}
Let $\mathcal{L}$ be an orthocomplemented lattice.
Then, there exists an abelian group $M=M(\mathcal{L})$ representing the functor
$\mathcal{M}(\mathcal{L};-)$,
\[
\mathcal{M}(\mathcal{L};-)=\text{Hom}_{\mathbb{Z}}(M,-)
\]
\noindent This means that a measure in $\mathcal{L}$ valued in $A$ is equivalent to a $\mathbb{Z}$-linear map from $M(\mathcal{L})$ to $A$.
\end{prop}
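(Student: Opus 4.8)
The plan is to construct $M(\mathcal{L})$ explicitly as a quotient of the free abelian group on $\mathcal{L}$ and then verify the representability claim by a universal-property argument. Concretely, I would start from the free abelian group $\mathbb{Z}^{\oplus \mathcal{L}}$ described in the Remark, with its canonical inclusion $\pi\colon\mathcal{L}\to\mathbb{Z}^{\oplus\mathcal{L}}$, and let $S\subseteq\mathbb{Z}^{\oplus\mathcal{L}}$ be the subgroup generated by all elements of the form
\[
\pi(x\vee y)-\pi(x)-\pi(y),\qquad x\bot y,
\]
that is, $x,y\in\mathcal{L}$ with $y\leq x^\bot$. Define $M(\mathcal{L})=\mathbb{Z}^{\oplus\mathcal{L}}/S$, and let $\iota\colon\mathcal{L}\to M(\mathcal{L})$ be the composite of $\pi$ with the quotient map.

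Next I would check the universal property: for any abelian group $A$, a function $\nu\colon\mathcal{L}\to A$ is an additive measure (in the sense of the Definition, $\nu(x\vee y)=\nu(x)+\nu(y)$ whenever $x\bot y$) if and only if the unique $\mathbb{Z}$-linear extension $\overline{\nu}\colon\mathbb{Z}^{\oplus\mathcal{L}}\to A$ guaranteed by the universal property of the free abelian group kills the subgroup $S$. Indeed, $\overline{\nu}$ vanishes on each generator $\pi(x\vee y)-\pi(x)-\pi(y)$ with $x\bot y$ precisely when $\nu(x\vee y)=\nu(x)+\nu(y)$; so $\overline{\nu}$ factors through $M(\mathcal{L})$ exactly when $\nu$ is additive. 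This factorization gives a map $\mathcal{M}(\mathcal{L};A)\to\text{Hom}_{\mathbb{Z}}(M(\mathcal{L}),A)$ sending $\nu$ to the induced $\widehat{\nu}$. Conversely, given any $f\in\text{Hom}_{\mathbb{Z}}(M(\mathcal{L}),A)$, the composite $f\circ\iota\colon\mathcal{L}\to A$ is additive because $f(\iota(x\vee y))-f(\iota(x))-f(\iota(y))=f\big(\iota(x\vee y)-\iota(x)-\iota(y)\big)=0$ in $A$, the argument being the image in $M(\mathcal{L})$ of a generator of $S$. These two assignments are mutually inverse: starting from $\nu$, restricting $\widehat{\nu}$ along $\iota$ recovers $\nu$ since $\widehat{\nu}(\iota(x))=\overline{\nu}(\pi(x))=\nu(x)$; starting from $f$, the linear extension of $f\circ\iota$ agrees with $f$ on the generators $\iota(x)$, hence everywhere since these generate $M(\mathcal{L})$. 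One should also note this bijection is natural in $A$ (post-composition with a group homomorphism $A\to A'$ commutes with both constructions), so $\mathcal{M}(\mathcal{L};-)\cong\text{Hom}_{\mathbb{Z}}(M(\mathcal{L}),-)$ as functors, which is exactly the representability statement.

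I do not expect a serious obstacle here: the proof is a standard "presentation of a functor by generators and relations" argument, and the only points requiring a little care are (i) making sure the relations imposed in $S$ are exactly those needed — using $\nu(x\vee y)=\nu(x)+\nu(y)$ for $x\bot y$ as the sole defining axiom, as the paper does earlier, rather than the inclusion–exclusion form — and (ii) verifying naturality in $A$, which is routine. A minor subtlety worth a sentence is that the empty/bottom case $\nu(\mathbf{0})=0$ comes for free: taking $x=y=\mathbf{0}$ (which satisfies $\mathbf{0}\bot\mathbf{0}$) forces $\nu(\mathbf{0})=2\nu(\mathbf{0})$, hence $\nu(\mathbf{0})=0$, so no extra relation need be added to $S$. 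If one wants the $G$-equivariant refinement later (for Corollary \ref{c:Corollary1}), the same group $M(\mathcal{L})$ carries an induced $G$-action and one passes to coinvariants $M(\mathcal{L})_G$, but for the present Proposition only the plain representability is needed.
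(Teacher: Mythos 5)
Your proposal is correct and matches the paper's own proof essentially verbatim: the paper also defines $M(\mathcal{L})=\mathbb{Z}^{\oplus\mathcal{L}}/S$ with $S$ generated by $x\vee y-x-y$ for $x\bot y$, and verifies representability by the same factorization/restriction argument through the universal property of the free abelian group. Your extra remarks on naturality in $A$ and on $\nu(\mathbf{0})=0$ are fine, minor additions to what the paper states.
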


\begin{proof}
Let $\mathbb{Z}^{\oplus \mathcal{L}}$ be the free abelian group
generated by $\mathcal{L}$. Let $S\subseteq
\mathbb{Z}^{\oplus \mathcal{L}}$ be the subgroup generated by the
elements
\[
\{x\vee y-x-y\,\colon\,x\bot y\}.
\]
where the elements of $\mathcal{L}$ are embedded in
$\mathbb{Z}^{\oplus \mathcal{L}}$ as usual (i.e., by using their
characteristic functions). Let us see that $M=\mathbb{Z}^{\oplus
\mathcal{L}}/S$ represents the functor
$\mathcal{M}(\mathcal{L};-)$.

First, assume that $x_i\bot x_j$ for
every $1\leq i<j\leq k$. Given that $x_i\leq x_1^\bot$ for $2\leq
i\leq k$, we have $x_2\vee\ldots\vee x_k\leq x_1^\bot$ and then,
$x_1\vee(x_2\vee\ldots\vee x_k)=x_1+(x_2\vee\ldots\vee x_k)$ in $M$.
By induction, we obtain that $x_1\vee\ldots\vee x_n=x_1+\ldots+x_n$
in $M$.

Second, any measure $\mathcal{L}\rightarrow A$ defines a
function $\mathbb{Z}^{\oplus \mathcal{L}}\rightarrow A$ mapping
$S$ to zero. Analogously, any linear function $\nu:M\rightarrow A$
is a linear function from $\mathbb{Z}^{\oplus \mathcal{L}}$ such
that $\nu(S)=0$. Then, the restriction of $\nu$ to $\mathcal{L}$
defines a measure on $\mathcal{L}$.
\end{proof}

\begin{cor}
Let $\mathcal{L}$ be an orthocomplemented lattice and assume that a group $G$ acts by automorphism. Let $A$ be an abelian group
where $G$ acts trivially.

The map $\pi:\mathcal{L}\rightarrow M(\mathcal{L})$ is a measure and
satisfies the following universal property: any measure with values
in $A$ factorizes as a $\mathbb{Z}$-linear map
$M(\mathcal{L})\rightarrow A$.

Also, the map $\pi_G:\mathcal{L}\rightarrow M_G$ is an invariant measure
and satisfies the following universal property:
any invariant measure with
values in $A$ factorizes as a $\mathbb{Z}$-linear
map $M_G\rightarrow A$.

We can represent the properties of $\pi$ and $\pi_G$
by the following diagrams:
\[
\xymatrix{
\mathcal{L}\ar[r]^{\forall\nu}\ar[d]_{\pi}
\ar@{}[dr]|<<<<\circlearrowleft&A\\
M(\mathcal{L})\ar@{-->}[ur]_{\exists!\overline{\nu}}&
}\qquad
\xymatrix{
\mathcal{L}\ar[r]^{\forall\nu'}\ar[d]_{\pi_G}
\ar@{}[dr]|<<<<\circlearrowleft&A\\
M(\mathcal{L})_G\ar@{-->}[ur]_{\exists!\overline{\nu'}}&
}
\]
\noindent where $\nu$ (resp. $\nu'$) is a measure (resp. invariant
measure) and $\overline{\nu}$, $\overline{\nu'}$ are linear maps.
The commutativity means that $\nu=\overline{\nu}\pi$,
$\nu'=\overline{\nu'}\pi_G$.
\end{cor}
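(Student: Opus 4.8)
The plan is to derive the whole statement from Proposition~\ref{rep} together with the universal properties of the free abelian group and of the coinvariants recalled in the preceding Remark; no new construction is needed, only an unwinding of definitions.

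First I would treat $\pi$. By construction $\pi:\mathcal{L}\to M(\mathcal{L})$ is the canonical embedding $\mathcal{L}\hookrightarrow\mathbb{Z}^{\oplus\mathcal{L}}$ composed with the quotient $\mathbb{Z}^{\oplus\mathcal{L}}\to M(\mathcal{L})=\mathbb{Z}^{\oplus\mathcal{L}}/S$. If $x\bot y$ then $x\vee y-x-y\in S$ by the definition of $S$, so $\pi(x\vee y)=\pi(x)+\pi(y)$; hence $\pi$ is an additive measure. For its universal property, a measure $\nu:\mathcal{L}\to A$ extends uniquely to a $\mathbb{Z}$-linear map $\widetilde{\nu}:\mathbb{Z}^{\oplus\mathcal{L}}\to A$ by the universal property of the free abelian group, and $\widetilde{\nu}$ vanishes on every generator $x\vee y-x-y$ of $S$ because $\nu$ is additive; so it vanishes on $S$ and descends to $\overline{\nu}:M(\mathcal{L})\to A$ with $\overline{\nu}\pi=\nu$. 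Uniqueness holds because $\pi(\mathcal{L})$ generates $M(\mathcal{L})$ as an abelian group. This part is essentially a repackaging of Proposition~\ref{rep}.

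Next I would treat $\pi_G$. The first point is that $G$ acts on $M(\mathcal{L})$: the action on $\mathcal{L}$ extends $\mathbb{Z}$-linearly to $\mathbb{Z}^{\oplus\mathcal{L}}$, and since each $\alpha_g$ is an automorphism of orthocomplemented lattices it preserves $\bot$, so $\alpha_g(x\vee y-x-y)=\alpha_g x\vee\alpha_g y-\alpha_g x-\alpha_g y\in S$ whenever $x\bot y$; thus the action descends to $M(\mathcal{L})$ and one may form $M(\mathcal{L})_G=M(\mathcal{L})/R$. Then $\pi_G:\mathcal{L}\to M(\mathcal{L})_G$ is $\pi$ followed by the quotient $M(\mathcal{L})\to M(\mathcal{L})_G$, hence a measure (a $\mathbb{Z}$-linear image of the measure $\pi$), and it is invariant since $g\cdot\pi(x)-\pi(x)\in R$ maps to $0$, giving $\pi_G(\alpha_g x)=\pi_G(x)$. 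For the universal property, let $\nu:\mathcal{L}\to A$ be an invariant measure with $G$ acting trivially on $A$; by the previous step it factors as $\overline{\nu}:M(\mathcal{L})\to A$ with $\overline{\nu}\pi=\nu$. The two $\mathbb{Z}$-linear maps $\overline{\nu}\circ g$ and $\overline{\nu}$ agree on the generating set $\pi(\mathcal{L})$, because $\overline{\nu}(g\cdot\pi(x))=\nu(\alpha_g x)=\nu(x)=\overline{\nu}(\pi(x))$, hence they agree on all of $M(\mathcal{L})$; therefore $\overline{\nu}$ kills $R$ and descends to $\overline{\nu'}:M(\mathcal{L})_G\to A$ with $\overline{\nu'}\pi_G=\nu$, uniquely since $\pi_G(\mathcal{L})$ generates $M(\mathcal{L})_G$. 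The two triangles in the statement encode precisely these factorizations, with $\nu=\overline{\nu}\pi$ and $\nu'=\overline{\nu'}\pi_G$.

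The only steps I expect to require genuine care are the well-definedness of the $G$-action on $M(\mathcal{L})$, i.e.\ that the subgroup $S$ is stable under $G$, and the small observation that two $\mathbb{Z}$-linear maps out of $M(\mathcal{L})$ that agree on $\pi(\mathcal{L})$ agree everywhere; everything else is a direct consequence of universal properties already available in the excerpt.
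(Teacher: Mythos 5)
Your proof is correct, but it takes a more hands-on route than the paper for the invariant half. For $\pi$ both you and the paper simply invoke (or repackage) Proposition~\ref{rep}, so there is no difference there. For $\pi_G$, the paper argues abstractly: it identifies $(-)_G$ with $(-)\otimes_{\mathbb{Z}[G]}\mathbb{Z}$, uses the $\otimes$-$\mathrm{Hom}$ adjunction together with $A=A^G$ to see that the factorization $\overline{\nu'}\in\mathrm{Hom}_{\mathbb{Z}}(M,A)$ of an invariant measure is $G$-equivariant, and then quotes the universal property of coinvariants to factor it uniquely through $M_G$; this is the same adjunction computation that reappears explicitly in Corollary~\ref{c:Corollary1}. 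You instead work at the level of generators: you check that $S$ is stable under the induced $G$-action on $\mathbb{Z}^{\oplus\mathcal{L}}$ (so that the action genuinely descends to $M(\mathcal{L})$, a point the paper leaves implicit), and you verify directly that $\overline{\nu}$ and $\overline{\nu}\circ g$ agree on the generating set $\pi(\mathcal{L})$, hence $\overline{\nu}$ kills the relation submodule $R$ and descends to $M(\mathcal{L})_G$, with uniqueness because $\pi_G(\mathcal{L})$ generates. Your version is more elementary and has the merit of making the well-definedness of the $G$-action explicit; the paper's version is shorter and sets up the homological formalism ($\mathrm{Hom}_{\mathbb{Z}}(M_G,A)\cong\mathrm{Hom}_{\mathbb{Z}}(M,A)^G$) that is reused immediately afterwards. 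Both are complete proofs of the stated universal properties.
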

\begin{proof}

\noindent Clearly, $\pi$ is a measure with values in $M$
and $\pi_G$ is an invariant measure with values in $M_G$.
The result for $\pi$ follows from Proposition \ref{rep}.

\noindent Now recall that the functor $(-)_G$ is naturally isomorphic to $(-)\otimes_{\mathbb{Z}[G]}\mathbb{Z}$ and the functor $(-)^G$ is naturally isomorphic to $\text{Hom}_{\mathbb{Z}[G]}(\mathbb{Z},-)$, \cite[Lemma 6.1.1]{MR1269324}.
Then, using the $\otimes$-$\text{Hom}$ adjunction and the equality $A=A^G$,
we see that any map $\overline{\nu'}\in\text{Hom}_{\mathbb{Z}}(M,A)$
is in fact a $G$-linear map.
Next, recall that any $G$-linear map $M\rightarrow A$ into a trivial $G$-module $A$
factors uniquely over a map from $M_G$.
Hence, $\overline{\nu'}$ factorizes uniquely
over $M_G$.

Summing up, given $\nu'\in\mathcal{M}(\mathcal{L};A,G)$
there exists a unique $\overline{\nu'}\in\text{Hom}_{\mathbb{Z}}(M_G,A)$
such that $\overline{\nu'}\pi=\nu'$.
\end{proof}

\

\begin{defs}
We call $\pi$ the \emph{universal measure} and $\pi_G$ the
\emph{universal invariant measure}. \noindent When no confusion
arises, we write just $x$ instead of $\pi(x)$ (or $\pi_G(x)$).
\end{defs}

\

The following Corollaries summarize our constructions up to now:

\begin{cor}\label{c:Corollary1}
We have the following characterizations,
\[
\mathcal{M}(\mathcal{L};A)=\text{Hom}_\mathbb{Z}(M,A),
\quad
\mathcal{M}(\mathcal{L};A,G)=\text{Hom}_\mathbb{Z}(M,A)^G.
\]
Moreover, if $K$ is a commutative ring and $A$ is a $K$-module
(ex. $K=\mathbb{R}$ and $A=\mathbb{R}^n$),
\[
\mathcal{M}(\mathcal{L};A)=\text{Hom}_K(M_K,A),\quad
\mathcal{M}(\mathcal{L};A,G)=\text{Hom}_K(M_K,A)^G,
\]
where $M_K:=M\otimes_{\mathbb{Z}} K$ is the extension
of $M$ from $\mathbb{Z}$ to $K$. Also, the map $\pi_K:\mathcal{L}\to M_K$
is the universal $K$-linear measure.
\end{cor}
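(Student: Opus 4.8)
The plan is to bootstrap everything from Proposition~\ref{rep}, the $\otimes$-$\text{Hom}$ adjunction recalled in the Remark above, and the elementary bookkeeping of the $G$-action on $M(\mathcal{L})$; no new idea is needed beyond assembling these pieces.

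First I would dispatch the two $\mathbb{Z}$-linear identities. The equality $\mathcal{M}(\mathcal{L};A)=\text{Hom}_\mathbb{Z}(M,A)$ is exactly the content of Proposition~\ref{rep}. For the invariant version, note that since $G$ acts on $\mathcal{L}$ by orthocomplemented lattice automorphisms it preserves the relation $\bot$, hence it permutes the generators $\{x\vee y-x-y:x\bot y\}$ of $S$; so the $G$-action on $\mathbb{Z}^{\oplus\mathcal{L}}$ descends to $M=M(\mathcal{L})$ and makes $\pi$ a $G$-equivariant map. Under the bijection of Proposition~\ref{rep} a measure $\nu$ corresponds to the unique $\overline{\nu}\in\text{Hom}_\mathbb{Z}(M,A)$ with $\overline{\nu}\pi=\nu$; since $\pi(\mathcal{L})$ generates $M$ and $G$ acts trivially on $A$, the condition $\nu(g\cdot x)=\nu(x)$ for all $x,g$ is equivalent to $g\cdot\overline{\nu}=\overline{\nu}$ in $\text{Hom}_\mathbb{Z}(M,A)$, which gives $\mathcal{M}(\mathcal{L};A,G)=\text{Hom}_\mathbb{Z}(M,A)^G$. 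Alternatively, one combines the previous Corollary, $\mathcal{M}(\mathcal{L};A,G)=\text{Hom}_\mathbb{Z}(M_G,A)$, with the isomorphism $\text{Hom}_\mathbb{Z}(M_G,A)\cong\text{Hom}_{\mathbb{Z}[G]}(M,A)=\text{Hom}_\mathbb{Z}(M,A)^G$ coming from $(-)_G\cong(-)\otimes_{\mathbb{Z}[G]}\mathbb{Z}$ and the adjunction.

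For the $K$-linear statements I would apply the $\otimes$-$\text{Hom}$ adjunction with $K_1=K$, $K_2=\mathbb{Z}$, $N=M$ viewed as a $\mathbb{Z}$-module, $T=K$ as a $(K,\mathbb{Z})$-bimodule, and target the $K$-module $A$, giving
\[
\text{Hom}_K(M_K,A)=\text{Hom}_K(M\otimes_\mathbb{Z}K,A)\cong\text{Hom}_\mathbb{Z}\bigl(M,\text{Hom}_K(K,A)\bigr)\cong\text{Hom}_\mathbb{Z}(M,A),
\]
the last step because $\text{Hom}_K(K,A)\cong A$ canonically. Composing with Proposition~\ref{rep} yields $\mathcal{M}(\mathcal{L};A)=\text{Hom}_K(M_K,A)$. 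The $G$-action on $M$ induces one on $M_K=M\otimes_\mathbb{Z}K$ (trivial on the $K$-factor), and all the isomorphisms just displayed are natural in $M$, hence $G$-equivariant; passing to $G$-invariants then gives $\mathcal{M}(\mathcal{L};A,G)=\text{Hom}_\mathbb{Z}(M,A)^G=\text{Hom}_K(M_K,A)^G$. Finally, $\pi_K:\mathcal{L}\to M_K$ is $\pi$ followed by $m\mapsto m\otimes 1$, so it is a measure (a group homomorphism composed with the measure $\pi$); its universal property is the concatenation of the two universal properties, namely a measure $\nu:\mathcal{L}\to A$ factors uniquely through $\overline{\nu}:M\to A$ by Proposition~\ref{rep}, and $\overline{\nu}$ extends uniquely to the $K$-linear map $M_K\to A$, $m\otimes k\mapsto k\,\overline{\nu}(m)$, since $M\otimes 1$ generates $M_K$ over $K$.

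The only delicate point, and the one I would verify carefully, is that the adjunction isomorphism $\text{Hom}_K(M\otimes_\mathbb{Z}K,A)\cong\text{Hom}_\mathbb{Z}(M,A)$ and the isomorphisms derived from it are genuinely $G$-equivariant for the actions described, so that taking $G$-invariants commutes with them; this is routine once one writes the isomorphism explicitly as $\Phi\mapsto\bigl(m\mapsto\Phi(m\otimes 1)\bigr)$ and observes that it is functorial in the $\mathbb{Z}[G]$-module $M$. Everything else is formal.
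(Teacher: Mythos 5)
Your proposal is correct and follows essentially the same route as the paper: Proposition~\ref{rep} together with the $\otimes$-$\text{Hom}$ adjunction (and the preceding corollary identifying $\mathcal{M}(\mathcal{L};A,G)$ with $\text{Hom}_{\mathbb{Z}}(M_G,A)$, which is your stated alternative and is exactly the paper's computation $\text{Hom}_{\mathbb{Z}}(M_G,A)=\text{Hom}_{\mathbb{Z}[G]}(\mathbb{Z},\text{Hom}_{\mathbb{Z}}(M,A))=\text{Hom}_{\mathbb{Z}}(M,A)^G$). The extra details you supply --- that the $G$-action permutes the generators of $S$ and so descends to $M$ making $\pi$ equivariant, the $G$-equivariance of the adjunction isomorphism $\Phi\mapsto\bigl(m\mapsto\Phi(m\otimes 1)\bigr)$, and the universal property of $\pi_K$ --- are points the paper leaves implicit, and you handle them correctly.
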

\begin{proof}
Using the $\otimes$-$\text{Hom}$ adjunction, we have
\[\text{Hom}_{\mathbb{Z}}(M_G,A)=\text{Hom}_{\mathbb{Z}}(\mathbb{Z}\otimes_{\mathbb{Z}[G]}M,A)=\]
\[\text{Hom}_{\mathbb{Z}[G]}(\mathbb{Z},\text{Hom}_\mathbb{Z}(M,A))=\text{Hom}_\mathbb{Z}(M,A)^G.\]

\noindent Now, if $A$ is a $K$-module,
from the $\otimes$-$\text{Hom}$ adjunction we have,
\[
\text{Hom}_K(M_K,A)=\text{Hom}_K(M\otimes_\mathbb{Z}K,A)=
\text{Hom}_\mathbb{Z}(M,\text{Hom}_K(K,A))=\text{Hom}_\mathbb{Z}(M,A).
\]
\end{proof}

\begin{cor}
Let $K$ be a commutative ring. Then,
$\mathcal{M}(-;K)$ is a functor from the category of orthocomplemented lattices to the category of $K$-modules.

Also, if we fix the orthocomplemented lattice,
$\mathcal{M}(\mathcal{L};-)$ is a functor
from the category of $K$-modules to the category of $K$-modules.
\end{cor}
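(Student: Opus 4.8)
The plan is to deduce both assertions from Proposition~\ref{rep} and Corollary~\ref{c:Corollary1}, which exhibit $\mathcal{M}(\mathcal{L};-)$ as a representable functor, together with the elementary functoriality of $\mathrm{Hom}$ and $\otimes$. The one genuinely verifiable point — and it is a mild one — is that a morphism of orthocomplemented lattices pulls back measures to measures; everything else is formal bookkeeping.

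For the first assertion I would first upgrade the assignment $\mathcal{L}\mapsto M(\mathcal{L})$ to a covariant functor from orthocomplemented lattices to $\mathbf{Ab}$. Given a morphism $f:\mathcal{L}\to\mathcal{L}'$, the key observation is that $f$ preserves orthogonality: if $x\bot y$, i.e.\ $y\leq x^{\bot}$, then since $f$ is order preserving and commutes with $(-)^{\bot}$ we get $f(y)\leq f(x^{\bot})=f(x)^{\bot}$, so $f(x)\bot f(y)$; moreover $f(x\vee y)=f(x)\vee f(y)$. Hence the composite $\pi'\circ f:\mathcal{L}\to M(\mathcal{L}')$ is a measure, and by the universal property of the universal measure $\pi:\mathcal{L}\to M(\mathcal{L})$ it factors uniquely through a $\mathbb{Z}$-linear map $M(f):M(\mathcal{L})\to M(\mathcal{L}')$ with $M(f)\circ\pi=\pi'\circ f$. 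Uniqueness in the universal property immediately forces $M(\mathrm{id})=\mathrm{id}$ and $M(g\circ f)=M(g)\circ M(f)$, so $M$ is a functor. Composing with the covariant functor $(-)\otimes_{\mathbb{Z}}K$ gives a covariant functor $\mathcal{L}\mapsto M(\mathcal{L})_K$, and then Corollary~\ref{c:Corollary1} gives
\[
\mathcal{M}(-;K)\;=\;\mathrm{Hom}_K\big(M(-)_K,\,K\big),
\]
a contravariant functor from orthocomplemented lattices to the category of $K$-modules, the $K$-module structure being the usual one on a $\mathrm{Hom}_K$ with target $K$. Concretely, a morphism $f:\mathcal{L}\to\mathcal{L}'$ acts by $\nu'\mapsto\nu'\circ f$, which one checks directly is again a $K$-valued measure (this is exactly the preservation of orthogonality and joins noted above) and which depends $K$-linearly and functorially on $\nu'$.

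For the second assertion, fix $\mathcal{L}$ and set $N:=M(\mathcal{L})_K=M(\mathcal{L})\otimes_{\mathbb{Z}}K$, a fixed $K$-module. By Corollary~\ref{c:Corollary1} there is an identification $\mathcal{M}(\mathcal{L};A)=\mathrm{Hom}_K(N,A)$ for every $K$-module $A$, and for a $K$-linear map $\phi:A\to A'$ the induced map on measures is post-composition $\nu\mapsto\phi\circ\nu$; this sends a measure to a measure because the defining relations $\nu(x\vee y)=\nu(x)+\nu(y)$ for $x\bot y$ are preserved by the additive map $\phi$. Thus $\mathcal{M}(\mathcal{L};-)$ is naturally isomorphic to the covariant representable functor $\mathrm{Hom}_K(N,-)$ on $K$-modules, which is a functor by the standard functoriality of $\mathrm{Hom}$, and in particular $\mathrm{Hom}_K(N,A)$ is a $K$-module, so the target is indeed the category of $K$-modules.

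I expect no real obstacle here: the only substantive step is the compatibility of lattice morphisms with orthogonality and finite joins, which uses only the axioms for morphisms of orthocomplemented lattices, and the rest is composing the already-available functors $M(-)$, $(-)\otimes_{\mathbb{Z}}K$, $\mathrm{Hom}_K(-,-)$ and invoking the uniqueness clauses in the universal properties established above — the only thing to be careful about being the variance of $\mathcal{M}(-;K)$ in the lattice variable.
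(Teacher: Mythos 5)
Your proof is correct and follows essentially the same route as the paper: the paper's proof is just the pre-composition map $f^{*}:\nu'\mapsto\nu'\circ f$ in the lattice variable and the post-composition map $\phi_{*}:\nu\mapsto\phi\circ\nu$ in the coefficient variable, which are exactly the concrete descriptions you give. Your extra steps---verifying that a morphism of orthocomplemented lattices preserves orthogonality and joins so that $\nu'\circ f$ is again a measure, noting the contravariance of $\mathcal{M}(-;K)$, and packaging the whole thing through the functoriality of $M(-)$ and representability---only make explicit what the paper leaves implicit.
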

\begin{proof}
Let $f:\mathcal{L}\rightarrow \mathcal{L}'$ be a lattice map between two orthocomplemented lattices.
Pre-composing with $f$, we get a $K$-linear map, $f^*:\mathcal{M}(\mathcal{L}';K)\to \mathcal{M}(\mathcal{L};K)$.

Analogously, if $\phi:A\rightarrow A'$ is $K$-linear,
$\phi_*:\mathcal{M}(\mathcal{L};A)\rightarrow \mathcal{M}(\mathcal{L};A')$ is also $K$-linear.
\end{proof}

\

\noindent Let us mention some simple applications of our result. For
example, if we are interested in measures with values over
$\mathbb{C}$, our results show that they are parameterized by
complex linear functionals. Also, if we are interested in (for
example) $G$-invariant real measures, we proved that they are
parameterized by $G$-invariant real linear maps.

\subsection{Generalization of Groemer's integral theorem}

In this Subsection we extend Groemer's integral theorem to the more
general setting of orthocomplemented lattices. For this purpose, we
need to generalize the notion of \emph{generating set} to arbitrary
orthocomplemented lattices:

\begin{defs}
Let $\mathcal{L}$ be an orthocomplemented lattice and assume that a
group $G$ acts on the lattice $\mathcal{L}$ by automorphisms. Let
$B\subseteq \mathcal{L}$ be a subset closed under finite meets.
The group $G$ may not act on $B$ but we always has
an action of its normalizer subgroup,
\[N_G(B)=\{g\in G\,\colon\, gB=B\}.\]
Notice that the inclusion $B\subseteq\mathcal{L}$ induces a
set function
(not necessarily injective) $\mu:B/N_G(B)\to\mathcal{L}/G$.
We say that $B$ is
\begin{itemize}
\item
an \emph{orthogonal generating set} if every $x\in \mathcal{L}$ is
equal to $x=b_1\vee\ldots\vee b_k$ for some pairwise orthogonal
$b_1,\ldots,b_k\in B$.
\item
a $\sigma$-\emph{orthogonal generating set} if every $x\in
\mathcal{L}$ is equal to $x=\bigvee b_i$ for some pairwise
orthogonal numerable subset $\{b_i\}\subseteq B$.
\item an \emph{orthogonal generating set for the action} if
$\mu$ is injective and
the set $G\cdot B$ is an orthogonal generating set for $\mathcal{L}$.
\item a $\sigma$-\emph{orthogonal generating set for the action} if
$\mu$ is injective and
the set $G\cdot B$ is a $\sigma$-orthogonal generating set for $\mathcal{L}$.
\end{itemize}
\end{defs}

\

\noindent Now we are ready to construct our extension of Groemer's
integral theorem to orthocomplemented lattices:

\begin{theorem}[Non-Boolean Groemer's integral theorem]\label{orth-groe}
Let $\mathcal{L}$ be an orthocomplemented lattice where a group $G$ acts.
Let $B$ be an orthogonal generating set for the action of $G$.
Then, invariant measures on $\mathcal{L}$ are in bijection with
$N_G(B)$-invariant functions on $B$, $\nu$, such that
\[
\nu(b_1\vee b_2)=\nu(b_1)+\nu(b_2),\quad
\forall b_1,b_2,b_1\vee b_2\in B,\quad b_1\bot b_2.
\]
\end{theorem}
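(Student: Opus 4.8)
The plan is to reduce the statement to an identification of the universal invariant measure $\pi_G\colon\mathcal{L}\to M(\mathcal{L})_G$ of Corollary~\ref{c:Corollary1}. Fix the coefficient group $A$. By that corollary, $\mathcal{M}(\mathcal{L};A,G)=\text{Hom}_{\mathbb{Z}}(M(\mathcal{L})_G,A)$ via $\nu=\overline{\nu}\circ\pi_G$. On the other side, let $T\subseteq\mathbb{Z}^{\oplus B}$ be the subgroup generated by all $b_1\vee b_2-b_1-b_2$ (with $b_1\bot b_2$, $b_1\vee b_2\in B$) together with all $h\cdot b-b$ (with $h\in N_G(B)$, $b\in B$), and set $N:=\mathbb{Z}^{\oplus B}/T$. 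By the universal property of the free abelian group on $B$ followed by that of the quotient, $\text{Hom}_{\mathbb{Z}}(N,A)$ is exactly the set of $N_G(B)$-invariant functions $B\to A$ satisfying the additivity relation of the statement. Thus it suffices to produce a natural isomorphism $N\cong M(\mathcal{L})_G$ compatible with $B\hookrightarrow\mathcal{L}$; precomposition of homomorphisms along it then yields the asserted bijection, and tracing through the universal properties shows it is simply $\nu\mapsto\nu|_B$.

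Next I would build the comparison homomorphism. The inclusion $B\hookrightarrow\mathcal{L}$ induces $\mathbb{Z}^{\oplus B}\to\mathbb{Z}^{\oplus\mathcal{L}}\to M(\mathcal{L})\to M(\mathcal{L})_G$. The generators $b_1\vee b_2-b_1-b_2$ of $T$ already lie in the subgroup $S$ of Proposition~\ref{rep}, hence vanish in $M(\mathcal{L})$; the generators $h\cdot b-b$ vanish in the coinvariants since $h\in G$. So the composite factors through a homomorphism $\bar\psi\colon N\to M(\mathcal{L})_G$. Surjectivity of $\bar\psi$ is immediate: since $G\cdot B$ is an orthogonal generating set, any $x\in\mathcal{L}$ is a pairwise orthogonal join $x=c_1\vee\cdots\vee c_k$ with $c_i\in G\cdot B$, and — as in the proof of Proposition~\ref{rep} — pairwise orthogonality gives $x=c_1+\cdots+c_k$ in $M(\mathcal{L})$; writing $c_i=g_i\cdot b_i$ with $b_i\in B$ and using that lattice automorphisms act on $M(\mathcal{L})$, each $c_i$ equals $b_i$ in $M(\mathcal{L})_G$, so $\pi_G(x)=\bar\psi(\overline{b_1}+\cdots+\overline{b_k})$. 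Since the $\pi_G(x)$ generate $M(\mathcal{L})_G$, $\bar\psi$ is surjective.

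For injectivity I would construct a left inverse. The hypothesis that $\mu\colon B/N_G(B)\to\mathcal{L}/G$ is injective is used to define $\tilde\nu_0\colon G\cdot B\to N$, $\tilde\nu_0(g\cdot b):=\overline{b}$: if $g_1b_1=g_2b_2$ with $b_1,b_2\in B$, then $b_1$ and $b_2$ lie in one $G$-orbit, hence (by injectivity of $\mu$) in one $N_G(B)$-orbit, say $b_2=hb_1$, so $\overline{b_2}=\overline{b_1}$ in $N$; and $\tilde\nu_0$ is $G$-invariant by construction. Define $\rho\colon\mathcal{L}\to N$ by $\rho(x):=\sum_{i=1}^{k}\tilde\nu_0(c_i)$ for a pairwise orthogonal decomposition $x=c_1\vee\cdots\vee c_k$ with $c_i\in G\cdot B$. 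Granting well-definedness, $\rho$ is a measure (for $x\bot y$, each element $\leq x$ is orthogonal to each element $\leq y$, so concatenating decompositions of $x$ and $y$ gives one of $x\vee y$), it is $G$-invariant (apply $g$ to a decomposition and use invariance of $\tilde\nu_0$), and $\rho|_B$ is the canonical map $B\to N$. By Proposition~\ref{rep} and the universal property of the coinvariants, $\rho$ descends to $\bar\rho\colon M(\mathcal{L})_G\to N$, and $\bar\rho\circ\bar\psi(\overline{b})=\bar\rho(\pi_G(b))=\rho(b)=\overline{b}$, so $\bar\rho\circ\bar\psi=\text{id}_N$; with surjectivity, $\bar\psi$ is an isomorphism.

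The one genuinely non-formal step, and the one I expect to be the main obstacle, is the well-definedness of $\rho$: that $\sum_i\tilde\nu_0(c_i)\in N$ is independent of the chosen pairwise orthogonal $G\cdot B$-decomposition of $x$. This is the non-Boolean counterpart of the inclusion--exclusion hypothesis appearing in Groemer's theorem. The plan would be a common-refinement argument: given $x=\bigvee_i c_i=\bigvee_j c_j'$, use closure of $B$ under finite meets together with the generating property of $G\cdot B$ to produce a third pairwise orthogonal decomposition refining both and still contained in $G\cdot B$, and then identify the terms of the refinement with the $c_i$ and the $c_j'$ up to the $N_G(B)$-action, using injectivity of $\mu$, so that all three sums agree in $N$. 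Making this refinement precise is where the real work lies, and it is exactly here that the hypotheses on $B$ must be used in full force (in the general orthocomplemented setting one may even need to strengthen them so that such refinements are guaranteed to exist).
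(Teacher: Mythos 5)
Your reduction is in essence the same one the paper uses: both proofs try to present $M(\mathcal{L})_G$ by the generators $B/N_G(B)$ subject only to the orthogonal-additivity relations inside $B$, and then apply $\mathrm{Hom}_{\mathbb{Z}}(-,A)$. The parts you carry out are correct: the identification of $\mathrm{Hom}_{\mathbb{Z}}(N,A)$ with $N_G(B)$-invariant additive functions on $B$, the construction of the comparison map $\bar\psi\colon N\to M(\mathcal{L})_G$, and its surjectivity (which is the paper's statement that $\tilde{\mu}\colon\mathbb{Z}^{\oplus\overline{B}}\to M_G$ is onto). But note what this alone buys: surjectivity of $\bar\psi$ only gives an injection of $\mathcal{M}(\mathcal{L};A,G)$ into the invariant additive functions on $B$, i.e.\ precisely the ``Weak Non-Boolean Groemer'' statement that the paper records in the remark after Theorem \ref{orth-groe}, not the bijection.

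The genuine gap is the one you yourself flag: injectivity of $\bar\psi$, equivalently the well-definedness of your map $\rho$, equivalently (in the paper's notation) the identification $\ker(\tilde{\mu})=\langle\,\overline{b_1\vee b_2}-\overline{b_1}-\overline{b_2}\,\colon\,\overline{b_1}\bot\overline{b_2},\ \overline{b_1},\overline{b_2},\overline{b_1\vee b_2}\in\overline{B}\,\rangle$, which is the entire content of the theorem beyond the weak version. Your proposed common-refinement argument cannot close it as stated: in a non-distributive lattice two pairwise orthogonal $G\cdot B$-decompositions of the same element need not admit any common refinement, and closure of $B$ under finite meets is of no help because the meets of constituents from the two decompositions are typically $\mathbf{0}$. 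The paper's own Example of the subspace lattice of $\mathbb{R}^n$ with $B$ a single line makes this concrete: a plane decomposes as $\langle e_1\rangle\vee\langle e_2\rangle=\langle f_1\rangle\vee\langle f_2\rangle$ with all constituents atoms, so no refinement exists, and independence of the decomposition holds there for a different reason (all lines are $G$-equivalent and the number of constituents, the dimension, is decomposition-independent). So the step you call ``the real work'' is a missing idea, not a routine verification; it is exactly the point where the paper's proof invokes the displayed computation of $\ker(\tilde{\mu})$, and your correct suspicion that the hypotheses may need to be used ``in full force'' (or strengthened) to justify it is precisely what remains unproved in your proposal.
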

\begin{proof}
Let a)  $S\subseteq\mathbb{Z}^{\oplus\mathcal{L}}$ be the subgroup
generated by $\{x\vee y-x-y\,\colon\,x\bot y\}$ and
b) $\overline{B}\subseteq\mathcal{L}/G$ be the image
of $\mu:B/N_G(B)\to\mathcal{L}/G$.
Recall that $M=M(\mathcal{L})\cong \mathbb{Z}^{\oplus\mathcal{L}}/S$.
By hypothesis, the following map is surjective,
\[
\tilde{\mu}:\mathbb{Z}^{\overline{B}}\to M_G,\quad \overline{b}\mapsto [b].
\]
Then, $\mathbb{Z}^{\overline{B}}/\ker(\tilde{\mu})\cong M_G$, where
\[
\ker(\tilde{\mu})=\mathbb{Z}^{\overline{B}}\cap S_G=
\langle
\overline{b_1\vee b_2}-\overline{b_1}-\overline{b_2}\,\colon\,
\overline{b_1},\overline{b_2},\overline{b_1\vee b_2}
\in \overline{B},\, \overline{b_1}\bot\overline{b_2}\rangle.
\]
Finally, applying the functor $\text{Hom}_\mathbb{Z}(-,A)$, we obtain
\[
\text{Hom}_{\mathbb{Z}}(M_G,A)
\cong
\text{Hom}_{\mathbb{Z}}(\mathbb{Z}^{\oplus \overline{B}}/\ker(\tilde{\mu}),A).
\]
The first space is isomorphic to $\mathcal{M}(\mathcal{L};A,G)$.
A map $\nu$ in the second space
is a map in
$\text{Hom}_{\mathbb{Z}}(\mathbb{Z}^{\oplus \overline{B}},A)$,
such that $\nu(\ker(\tilde{\mu}))=0$. More specifically,
it is the same as a function
$\nu:\overline{B}\to A$ such that
\[
\nu(\overline{b_1\vee b_2})=\nu(\overline{b_1})+\nu(\overline{b_2}),\quad
\overline{b_1},\overline{b_2},\overline{b_1\vee b_2}
\in \overline{B},\quad \overline{b_1}\bot\overline{b_2}.
\]
The result follows from $B/N_{G}(B)\cong\overline{B}$.
\end{proof}

\

\begin{remark}
A similar result holds for $\sigma$-additive measures.
The steps are the following:
\begin{enumerate}
\item Define the set
$\mathcal{M}^{\sigma}(\mathcal{L};A)$
of $\sigma$-additive
measures with values in a Banach vector space $A$
over a complete field $K$ (say $K=\mathbb{R}$).
\item
Given that $M_{K}=M\otimes_{\mathbb{Z}}K$
is complete, see \cite[Ch.10]{MR0242802},
we can define the set of continuous linear maps (i.e. bounded operators) from
$M_{K}$ to $A$, that is,
$\nu\in \text{Cont}_{K}(M_{K},A)$ if and only if
\[
\nu(\bigvee_{i=1}^\infty x_i)=\sum_{i=1}^\infty\nu(x_i),\quad x_i\bot x_j,\,\forall i\neq j.
\]
From the following commutative square, it follows that
$\mathcal{M}^{\sigma}(\mathcal{L};A)\cong
\text{Cont}_{K}(M_{K},A)$,
\[
\xymatrix{
\mathcal{M}(\mathcal{L};A)\ar[r]^<<<<<{\sim}&\text{Hom}_{K}(M_{K},A)\\
\mathcal{M}^{\sigma}(\mathcal{L};A)\ar[r]\ar@{^{(}->}[u]\ar@{}[ur]|\circlearrowleft&
\text{Cont}_{K}(M_{K},A)\ar@{^{(}->}[u]
}
\]
\item Finally, using Theorem \ref{orth-groe}
and a density
argument, it is easy to prove the result for
$\sigma$-additive measures. Indeed,
let $B$ be a $\sigma$-orthogonal generating
set for the action of $G$.
The set $\mathcal{M}^\sigma(\mathcal{L};A,G)$
is in bijection with functions $\nu:B/N_G(B)\rightarrow A$, such that
\[
\nu(\bigvee_{i=1}^\infty b_i)=\sum_{i=1}^\infty \nu(b_i),\quad
\bigvee_{i=1}^\infty b_i, b_j\in B,\,b_i\bot b_j\,\forall i\neq j.
\]
\end{enumerate}
\end{remark}

Let us consider an example for what we have just proved. The above
theorem (and the accompanying remark) has the following consequence: in order to
define a measure on the orthomodular lattice of orthogonal
projectors $\mathcal{L}(\mathcal{W})$ of a von Neumann algebra
$\mathcal{W}$, it is sufficient to know an orthogonal generating set
for the projectors in $\mathcal{W}$. For the case of Type I factors,
this generating set is given by  one dimensional projections. We
will come back to this point later.

\

\begin{remark}
It is important to mention that, in general, it is a difficult
task to prove that a set $B\subseteq \mathcal{L}$ is
an orthogonal generating set for the action of a group $G$.
But, in some cases, it is possible to check that the
set $G\cdot B\subseteq \mathcal{L}$ is an orthogonal generating set.
Using this hypothesis for $B$ let us prove a weak
version of the Non-Boolean Groemer's integral theorem:
\begin{theorem}[Weak Non-Boolean Groemer's integral theorem]
The set $\mathcal{M}(\mathcal{L};A,G)$
is included in the set of $N_G(B)$-invariant
functions $\nu:B\to A$ such that
\[
\nu(b_1\vee b_2)=\nu(b_1)+\nu(b_2),\quad\forall
b_1,b_2\in B,\,b_1\bot b_2.
\]
\end{theorem}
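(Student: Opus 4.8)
The plan is to reuse the machinery of Theorem \ref{orth-groe} but only push the argument in one direction, which is exactly where the hypothesis is weaker. We are given that $G\cdot B$ is an orthogonal generating set for $\mathcal{L}$ (so every element of $\mathcal{L}$ is an orthogonal join of finitely many translates $g\cdot b$), but we are \emph{not} assuming that the induced map $\mu:B/N_G(B)\to\mathcal{L}/G$ is injective. The claim to prove is merely an inclusion: every $G$-invariant $A$-valued measure, when restricted to $B$, is $N_G(B)$-invariant and additive on orthogonal pairs lying in $B$. Both of these are immediate soft consequences of the definitions, so the proof should be short.

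First I would take $\nu\in\mathcal{M}(\mathcal{L};A,G)$ and restrict it to $B$, obtaining $\nu|_B:B\to A$. Additivity on orthogonal pairs inside $B$ is inherited verbatim: if $b_1,b_2,b_1\vee b_2\in B$ with $b_1\bot b_2$, then since $\nu$ is an additive measure on all of $\mathcal{L}$ we have $\nu(b_1\vee b_2)=\nu(b_1)+\nu(b_2)$, and each term is computed inside $B$. For $N_G(B)$-invariance, take $g\in N_G(B)$ and $b\in B$; then $g\cdot b\in B$ (by definition of the normalizer), and $G$-invariance of $\nu$ gives $\nu(g\cdot b)=\nu(b)$. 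Hence $\nu|_B$ is an $N_G(B)$-invariant function on $B$ satisfying the stated orthogonal-additivity identity, which is precisely the assertion.

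One should also observe that the map $\nu\mapsto \nu|_B$ is the one implicitly claimed, and record why it lands in the correct set rather than checking it is a bijection: the point of this weak version is that, lacking injectivity of $\mu$, we cannot descend an arbitrary additive function on $B$ to an invariant measure on $\mathcal{L}$, so only the ``easy'' containment survives. If one wishes, the same conclusion can be read off the commutative-algebra picture: the surjection $\tilde\mu:\mathbb{Z}^{\overline{B}}\to M_G$ of Theorem \ref{orth-groe} still exists (surjectivity only uses that $G\cdot B$ generates), so $\text{Hom}_{\mathbb{Z}}(M_G,A)\hookrightarrow \text{Hom}_{\mathbb{Z}}(\mathbb{Z}^{\overline{B}},A)$, and restricting a measure along $B\to M_G$ is exactly this injection followed by identifying $\overline{B}$-indexed linear maps with additive functions on $\overline{B}$, which pull back to $N_G(B)$-invariant additive functions on $B$.

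There is essentially no obstacle here; the only thing to be careful about is not to over-claim. In particular I would avoid asserting $N_G(B)$-invariance descends to a statement about $\overline{B}=B/N_G(B)$ as a bijection, since without injectivity of $\mu$ distinct orbits in $B/N_G(B)$ may be glued in $\mathcal{L}/G$, forcing extra relations that a general additive function on $B$ need not respect. So the proof is just the two one-line verifications above, framed explicitly as producing the claimed inclusion $\mathcal{M}(\mathcal{L};A,G)\hookrightarrow\{\nu:B\to A \text{ that are } N_G(B)\text{-invariant and orthogonally additive}\}$.
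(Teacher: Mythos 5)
Your proposal is correct, and its main line is genuinely more elementary than the paper's argument. The paper works entirely at the level of the representing modules: from the hypothesis that $G\cdot B$ orthogonally generates $\mathcal{L}$ it gets a surjection $\mathbb{Z}^{\oplus B}\otimes_{\mathbb{Z}[N_G(B)]}\mathbb{Z}[G]\twoheadrightarrow M$, applies the coinvariants functor to obtain $(\mathbb{Z}^{\oplus B})_{N_G(B)}\twoheadrightarrow M_G$, kills the orthogonal-additivity relations $R$, and then dualizes, so that the claimed inclusion is literally $\text{Hom}_{\mathbb{Z}}(M_G,A)\subseteq\text{Hom}_{\mathbb{Z}}\bigl((\mathbb{Z}^{\oplus B})_{N_G(B)}/\overline{R},A\bigr)$, i.e.\ the injectivity of the map dual to a surjection. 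You instead verify directly that restriction $\nu\mapsto\nu|_B$ lands in the set of $N_G(B)$-invariant, orthogonally additive functions on $B$, which is indeed a two-line check, and your final paragraph correctly identifies why only this containment (and not a bijection) survives when $\mu:B/N_G(B)\to\mathcal{L}/G$ is not injective. The one point you gloss over is that an \emph{inclusion} requires the restriction map itself to be injective; your phrase about ``landing in the correct set rather than checking it is a bijection'' skips this. It is easy to supply with the hypothesis you state at the outset: if $\nu|_B=\nu'|_B$ for two $G$-invariant measures, then $G$-invariance forces agreement on $G\cdot B$, and since every $x\in\mathcal{L}$ is a finite orthogonal join of elements of $G\cdot B$, additivity forces $\nu=\nu'$ on $\mathcal{L}$. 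Alternatively, your closing remark that $\text{Hom}_{\mathbb{Z}}(M_G,A)$ injects into functions on $B$ via the still-surjective $\tilde{\mu}$ recovers exactly the paper's mechanism, so the gap is cosmetic; what your elementary route buys is transparency (no coinvariants or induced modules needed), while the paper's route makes the injectivity automatic and keeps the statement in the same $\text{Hom}$-theoretic form as Theorem \ref{orth-groe}.
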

\begin{proof}
By hypothesis, the multiplication map
$\mu:\mathbb{Z}^{\oplus B}\otimes_{\mathbb{Z}[N_G(B)]}\mathbb{Z}[G]\to M$
is surjective. Applying $(-)_G$,
the map $\tilde{\mu}:(\mathbb{Z}^{\oplus B})_{N_G(B)}\to M_G$ is also surjective.
Notice that $\tilde{\mu}(\overline{R})=0$, where $R\subseteq\mathbb{Z}^{\oplus B}$,
is generated by $b_1\vee b_2-b_1-b_2$ for all $b_1,b_2\in B$, $b_1\bot b_2$
and $\overline{R}=R_{N_G(B)}$. Then,
\[
(\mathbb{Z}^{\oplus B}/R)_{N_G(B)}\twoheadrightarrow M_G\Longrightarrow
\text{Hom}_{\mathbb{Z}}(M_G,A)\subseteq
\text{Hom}_{\mathbb{Z}}((\mathbb{Z}^{\oplus B})_{N_G(B)}/\overline{R},A).
\]
\end{proof}

\end{remark}

\subsection{Positive measures on orthocomplemented lattices.}

Now, we restrict ourselves to the study of positive measures on
orthocomplemented lattices.

\begin{defs}
Let $\mathcal{L}$ be an orthocomplemented lattice. A \emph{positive
measure} on $\mathcal{L}$ is a measure taking positive values in
$\mathbb{R}$. We use the following notation

\begin{itemize}
\item $\mathcal{M}_+(\mathcal{L})$ denotes the set of positive measures,
\item $\mathcal{M}_+(\mathcal{L};G)$ denotes the set of invariant positive measures,
\item $\mathcal{M}_+^\sigma(\mathcal{L})$ denotes the set of $\sigma$-positive measures,
\item $\mathcal{M}_+^\sigma(\mathcal{L};G)$ denotes the set of invariant $\sigma$-positive measures.
\end{itemize}

\end{defs}

\noindent We discuss next some examples of generating sets and their
geometric properties. But first, let us give a geometric
construction

\begin{defs}
Given a subset $C$ of a real vector space $V$
we define its \emph{dual cone} as,
\[
C^{\ast}=\{\ell:V\rightarrow \mathbb{R}\,\colon\,\ell(C)\geq 0\}=
\{H\subseteq V\,\colon\,C\subseteq H^+\}.
\]

\noindent From \cite[\S 2.6.1]{MR2061575}, we know that the double
dual cone of $C$ is its convex hull and it determines the same dual cone.
\end{defs}

\begin{example}
\textbf{Geometrical properties of generating sets.} Let $B$ be an
orthogonal generating set for $\mathcal{L}$. Let us denote by
$M_{\mathbb{R}}$  the real vector space
$M(\mathcal{L})\otimes_{\mathbb{Z}}\mathbb{R}$. The kernel of a non-zero linear
functional $M_{\mathbb{R}}\rightarrow\mathbb{R}$ is an hyperplane
$H$ inside $M_{\mathbb{R}}$. By appealing to the universal real
measure we can identify every element $b\in B$ with an element of
$M_{\mathbb{R}}$. Thus, we can write $B\subseteq M_{\mathbb{R}}$
instead of $\pi_{\mathbb{R}}(B)\subseteq M_{\mathbb{R}}$.
Let us apply the \emph{dual cone} construction to $B\subseteq
M_{\mathbb{R}}$.

Using Theorem \ref{orth-groe}, it is easy to check that the dual cone of $B$ is the set of positive
measures
\[
\mathcal{M}_+(\mathcal{L})\cong B^\ast:=
\{ H\subseteq M_\mathbb{R}\,\colon\, B\subseteq H^+\}.
\]
Similarly, it follows
\begin{align*}
\mathcal{M}_+(\mathcal{L};G)&\cong
\{ H\subseteq M_\mathbb{R}\,\colon\, B\subseteq H^+,\,H=H^G\}\\
\mathcal{M}_+^\sigma(\mathcal{L})&\cong
\{ H\subseteq M_\mathbb{R}\,\colon\, B\subseteq H^+,\,H=\overline{H}\}\\
\mathcal{M}_+^\sigma(\mathcal{L};G)&\cong \{ H\subseteq
M_\mathbb{R}\,\colon\, B\subseteq H^+,\,H=H^G=\overline{H}\}.
\end{align*}
\end{example}

\begin{example}
\textbf{Hilbert space and standard quantum mechanics.} \noindent Let
$\mathcal{H}$ be a separable Hilbert space and consider the
orthomodular lattice of orthogonal projectors $\mathcal{L}$. Given
that any closed subspace $S$ possesses an orthonormal basis
$\{v_i\}_{i=1}^\infty$, the family of projectors of range $1$
(basically, one dimensional subspaces) is a $\sigma$-orthogonal
generating set. Notice that any projector of rank $1$ is given by
$|v\rangle\langle v|\in \mathcal{B}(\mathcal{H})$ for some $v\in
\mathcal{H}$. Thus, for an arbitrary separable Hilbert space, a
$\sigma$-orthogonal generating set is given by $B=\{|v\rangle\langle
v|\,\colon\,v\in\mathcal{H},\,\|v\|=1\}$. Notice that $\mathcal{L}$
is defined inside $\mathcal{B}(\mathcal{H})$ (see Proposition
\ref{l:embedded} below). More generally, if $\mathcal{W}$ is a von
Neumann algebra, we still have a similar picture. Recall that the
lattice associated to $\mathcal{W}$ is the lattice
$\mathcal{L}(\mathcal{W})$ of projectors in $\mathcal{W}$. Then, a
$\sigma$-orthogonal generating set $B$ will be some subset of
projectors on $\mathcal{W}$. In the previous case,
$\mathcal{W}=\mathcal{B}(\mathcal{H})$, the rank one projectors form
a $\sigma$-orthogonal generating set. In more general situations,
$B$ will be a set of projectors (not necessarily of rank one, ex.
Type III factor) which always exists.
\end{example}

\noindent Let us now turn to the following natural question
\cite[Ch. 5]{Hamhalter-QuantumMeasureTheory}:
\begin{center}
\emph{Does any real measure on $\mathcal{L}(\mathcal{W})$ extend to
a bounded linear functional on $\mathcal{W}$?}
\end{center}

\noindent The above question concerns the relationship between
$\mathcal{M}(\mathcal{L}(\mathcal{W});\mathbb{R})$ and
$\text{Cont}_\mathbb{C}(\mathcal{W};\mathbb{C})$. In order to
describe this problem in our terms, let us prove the following
fundamental fact:

\begin{prop}\label{l:embedded}
Let $\mathcal{W}$ be a von Neumann algebra on a separable Hilbert
space and let
$\mathcal{L}(\mathcal{W})$ be its orthomodular lattice of
projectors. Then, the universal complex measure is injective
\[
\xymatrix{
\mathcal{L}(\mathcal{W})\ar@{^(->}[r]^{i}
\ar@{^(->}[d]_{\pi_{\mathbb{C}}}&\mathcal{W}\\
M_{\mathbb{C}}\ar[ur]_{\overline{i}}&
}
\]
The map $\pi_\mathbb{C}$ is the universal complex
measure and the map $i$ is the inclusion
$\mathcal{L}(\mathcal{W})\subseteq\mathcal{W}$.
Moreover, the image of $\overline{i}$ is dense in $\mathcal{W}$.

Notice that $\overline{i}$ is not necessarily injective.
\end{prop}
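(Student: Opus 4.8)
The plan is to establish the diagram by first constructing $\overline{i}$ from the universal property of $\pi_{\mathbb C}$, then proving injectivity of $\pi_{\mathbb C}$, and finally proving the density of the image of $\overline{i}$. For the construction of $\overline{i}$: the inclusion $i:\mathcal L(\mathcal W)\hookrightarrow\mathcal W$ is a $\mathbb C$-valued measure, since orthogonal projectors $E_1\perp E_2$ (in the lattice sense, $E_2\le E_1^\bot$) satisfy $E_1\vee E_2=E_1+E_2$ as operators; hence by Corollary~\ref{c:Corollary1} (the $K=\mathbb C$ case) $i$ factors uniquely through a $\mathbb C$-linear map $\overline{i}:M_{\mathbb C}\to\mathcal W$ with $\overline{i}\circ\pi_{\mathbb C}=i$.

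For injectivity of $\pi_{\mathbb C}$: since $i=\overline{i}\circ\pi_{\mathbb C}$ is injective on $\mathcal L(\mathcal W)$, the map $\pi_{\mathbb C}$ is automatically injective when restricted to $\mathcal L(\mathcal W)$ — so the content to check is simply that the composite arrow labelled $\pi_{\mathbb C}$ in the diagram is the map $\mathcal L(\mathcal W)\to M_{\mathbb C}$, $x\mapsto \pi_{\mathbb C}(x)$, and that it is injective as a map of sets from the lattice; this is immediate because a section of an injective map (here, $i$ factoring as $\overline i\circ\pi_{\mathbb C}$) forces the first factor $\pi_{\mathbb C}|_{\mathcal L(\mathcal W)}$ to be injective. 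I would spell this out in one line: if $\pi_{\mathbb C}(x)=\pi_{\mathbb C}(y)$ then $i(x)=\overline i(\pi_{\mathbb C}(x))=\overline i(\pi_{\mathbb C}(y))=i(y)$, so $x=y$.

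For density of $\mathrm{im}(\overline i)$ in $\mathcal W$: by the spectral theorem, every self-adjoint $A\in\mathcal W$ is a norm-limit of finite real linear combinations $\sum_k\lambda_k E_k$ of projectors $E_k\in\mathcal L(\mathcal W)$ (take spectral projections of $A$); such a combination equals $\overline i(\sum_k\lambda_k\,\pi_{\mathbb C}(E_k))$, hence lies in $\mathrm{im}(\overline i)$. Writing an arbitrary $A\in\mathcal W$ as $A=\mathrm{Re}\,A+i\,\mathrm{Im}\,A$ with $\mathrm{Re}\,A=\tfrac12(A+A^{\ast})$, $\mathrm{Im}\,A=\tfrac1{2i}(A-A^{\ast})$ self-adjoint, and using that $\mathrm{im}(\overline i)$ is a $\mathbb C$-subspace, we conclude $\mathrm{im}(\overline i)$ is dense. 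I expect the main obstacle to be purely expository rather than mathematical: one must be careful that the relevant notion of convergence for the spectral approximation is the norm topology (so that ``dense'' is unambiguous), and that the formal sums in $M_{\mathbb C}$ are only finite — so $\overline i$ need not be surjective even when $\mathcal W$ is finite-dimensional only if the lattice is infinite, which is exactly why the final sentence of the statement warns that $\overline i$ is not injective (distinct formal combinations of projectors can represent the same operator, e.g. $E+E^\bot=\mathbf 1=\pi_{\mathbb C}(\mathbf 1)$ versus $\pi_{\mathbb C}(E)+\pi_{\mathbb C}(E^\bot)$, which already lie in $S$ and so are identified — but one should note the failure of injectivity comes precisely from relations in $\mathcal W$ not forced by orthogonality).
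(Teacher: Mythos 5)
Your proposal is correct and follows essentially the same route as the paper: construct $\overline{i}$ by noting that the inclusion $i:\mathcal{L}(\mathcal{W})\hookrightarrow\mathcal{W}$ is a complex-valued measure and invoking the universal property of $\pi_{\mathbb{C}}$, deduce injectivity of $\pi_{\mathbb{C}}$ from the factorization $i=\overline{i}\circ\pi_{\mathbb{C}}$ and the injectivity of $i$, and obtain density of $\mathrm{im}(\overline{i})$ from the norm-density of the linear span of projections in $\mathcal{W}$. The only (immaterial) difference is that you derive this density fact directly from the spectral theorem, while the paper cites it as a standard result on von Neumann algebras.
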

\begin{proof}
The set $\mathcal{L}(\mathcal{W})$ of orthogonal projectors in
$\mathcal{W}$ forms an orthomodular lattice which can be viewed as a
subset of $\mathcal{W}$ (i.e.,
$\mathcal{L}(\mathcal{W})\subseteq\mathcal{W}$). Due to the fact
that any von Neumann algebra is a complex vector space with respect
to the sum and multiplication by scalars, the inclusion map
$i:\mathcal{L}(\mathcal{W})\hookrightarrow\mathcal{W}$ satisfies the
axioms of a measure with values in a complex vector space and then,
from our previous results, this inclusion factorizes through
$M_{\mathbb{C}}$. This means that there exists a unique map
$\overline{i}$ such that $i=\overline{i}\circ\pi_{\mathbb{C}}$. From
the fact that $i$ is an injection, it follows that $\pi_\mathbb{C}$
is also injective.

Now, let us recall a standard result of von Neumann algebras: the
complex vector space spanned by $\mathcal{L}(\mathcal{W})$ is
norm-dense in $\mathcal{W}$ (see for example Corollary 0.4.9(b) and
the comment that follows in \cite{MR866671}). Then, given that the
space spanned by $\mathcal{L}(\mathcal{W})$ is contained in the
image of $\overline{i}$, we can conclude that
$\overline{i}(M_{\mathbb{C}})$ is also dense in $\mathcal{W}$.
\end{proof}

\begin{cor}
Let $\mathcal{W}$ be a von Neumann algebra over a separable Hilbert
space such that $\overline{i}$ is injective
($\mathcal{L}(\mathcal{W})\subseteq M_{\mathbb{C}}
\subseteq\mathcal{W}$). Then, there exists a bijection between
$\sigma$-probabilities (resp. $\sigma$-positive measures) on
$\mathcal{L}(\mathcal{W})$ and normal states (resp. positive linear
functionals) on $\mathcal{W}$.
\end{cor}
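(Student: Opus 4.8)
The plan is to read the two classes of functionals appearing on the right --- normal states and (normal) positive linear functionals on $\mathcal{W}$ --- through the identifications already established: Proposition~\ref{l:embedded} together with the standing hypothesis, Corollary~\ref{c:Corollary1}, and the $\sigma$-additive refinement in the Remark following Theorem~\ref{orth-groe}. Under the hypothesis, $\overline{i}$ realizes $M_{\mathbb{C}}$ as a norm-dense linear subspace of $\mathcal{W}$ which contains $\pi_{\mathbb{C}}(\mathcal{L}(\mathcal{W}))$ and is spanned over $\mathbb{C}$ by it (since $M_{\mathbb{C}}$ is a quotient of $\mathbb{Z}^{\oplus\mathcal{L}(\mathcal{W})}\otimes_{\mathbb{Z}}\mathbb{C}$); this is the point where the injectivity hypothesis does its work, standing in for a Gleason-type extension theorem. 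By Corollary~\ref{c:Corollary1}, complex measures on $\mathcal{L}(\mathcal{W})$ are the $\mathbb{C}$-linear functionals on $M_{\mathbb{C}}$; those nonnegative on projections, together with $\sigma$-additivity, are the $\sigma$-positive measures, and adding $\nu(\mathbf{1})=1$ selects the $\sigma$-probabilities. So everything reduces to comparing, on the dense subspace $M_{\mathbb{C}}\subseteq\mathcal{W}$, the functionals coming from $\sigma$-(positive) measures with the restrictions of normal (positive) functionals.

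For the forward direction, let $\nu$ be a $\sigma$-positive measure and $\overline{\nu}:M_{\mathbb{C}}\to\mathbb{C}$ the associated linear functional. Nonnegativity of $\nu$ on projections together with $\overline{\nu}(\mathbf{1})<\infty$ forces $\overline{\nu}$ to be norm-bounded: for self-adjoint $C\in M_{\mathbb{C}}$ one has $\|C\|\mathbf{1}\pm C\in M_{\mathbb{C}}$ with $0\le \|C\|\mathbf{1}\pm C$, and by the spectral theorem these are norm-limits of finite positive combinations of projections, on which $\overline{\nu}$ is nonnegative; splitting a general element into real and imaginary parts yields a bound. Hence $\overline{\nu}$ extends uniquely to a bounded $\widehat{\nu}$ on $\mathcal{W}=\overline{M_{\mathbb{C}}}$. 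Again by spectral approximation $\widehat{\nu}\ge 0$ on $\mathcal{W}$; if $\nu$ is a $\sigma$-probability then $\widehat{\nu}(\mathbf{1})=1$, so $\widehat{\nu}$ is a state; and $\widehat{\nu}|_{\mathcal{L}(\mathcal{W})}=\nu$ is countably additive on pairwise orthogonal families of projections, i.e.\ $\widehat{\nu}$ is normal.

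Conversely, if $\omega$ is a normal positive functional (resp.\ a normal state) on $\mathcal{W}$, its restriction to $\mathcal{L}(\mathcal{W})$ is a nonnegative, countably additive, $\mathbb{R}_{\ge 0}$-valued (resp.\ $[0,1]$-valued, normalized) function on the projection lattice, that is, a $\sigma$-positive measure (resp.\ $\sigma$-probability). Feeding $\omega|_{\mathcal{L}(\mathcal{W})}$ into the construction above yields a bounded functional on $\mathcal{W}$ agreeing with $\omega$ on $\pi_{\mathbb{C}}(\mathcal{L}(\mathcal{W}))$, hence on $M_{\mathbb{C}}$, hence on all of $\mathcal{W}$ by density and continuity; so the two assignments are mutually inverse. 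Injectivity of $\nu\mapsto\widehat{\nu}$ is automatic, since a linear functional on $M_{\mathbb{C}}$ is determined by its values on a spanning set; this gives the asserted bijection in both the probability/state case and the positive-measure/positive-functional case.

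I expect the real obstacle to be the equivalence ``$\sigma$-additive on $\mathcal{L}(\mathcal{W})$'' $\Longleftrightarrow$ ``normal on $\mathcal{W}$'': one must be sure that countable additivity of $\nu$ on pairwise orthogonal projections upgrades, after the norm-extension, to genuine $\sigma$-weak (normal) continuity of $\widehat{\nu}$ on all of $\mathcal{W}$, and not merely to boundedness. On a separable Hilbert space this is precisely the classical characterization of normal functionals as the completely (equivalently, countably) additive ones on the projection lattice, so it may be invoked, but it is the one ingredient that genuinely uses separability and is not formal. The secondary technical point is the spectral approximation of positive operators by positive combinations of projections, used above to transfer positivity between $\mathcal{L}(\mathcal{W})$ and $\mathcal{W}$.
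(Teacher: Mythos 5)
Your proposal follows essentially the same route as the paper's own proof: restriction of a normal (positive) functional to $\mathcal{L}(\mathcal{W})$ in one direction; in the other, the universal property giving $\overline{\nu}$ on $M_{\mathbb{C}}$, extension to $\mathcal{W}$ by norm-density (Proposition \ref{l:embedded}), positivity of the extension via spectral approximation of positive operators by positive combinations of projections, and normality read off from $\sigma$-additivity on the projection lattice. Where you go beyond the paper is in trying to justify norm-boundedness of $\overline{\nu}$ on $M_{\mathbb{C}}$ before extending, and in naming explicitly the classical fact that countable additivity on projections characterizes normal functionals (the paper simply extends ``by continuity'' and asserts normality in one line). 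Be aware, though, that your boundedness argument is circular as written: you deduce $\overline{\nu}(\|C\|\mathbf{1}\pm C)\geq 0$ from nonnegativity on finite positive combinations of projections by passing to a norm-limit, which already presupposes the continuity you are trying to establish; positivity of $\overline{\nu}$ on projections only gives positivity on \emph{finite positive combinations}, not on all positive elements of $M_{\mathbb{C}}$. The paper's proof leaves this same transfer (from a measure on projections to a norm-bounded functional on the dense span) implicit, so your attempt is no weaker than the printed argument, but the added step does not actually close the hole it targets.
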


\begin{proof}
As we mentioned in Section \ref{s:StatesAsMeasures}, states are
defined as positive complex-valued continuous linear functionals
$\mu:\mathcal{W}\rightarrow\mathbb{C}$. Moreover, the
restriction of $\mu$ to $\mathcal{L}(\mathcal{W})$ defines a
$\sigma$-probability.

Suppose that we have a $\sigma$-probability
$\nu:\mathcal{L}(\mathcal{W})\to[0,1]$.
Given that $[0,1]$ is included in $\mathbb{C}$,
we can assume that the codomain of $\nu$ is $\mathbb{C}$ and it follows now
that $\nu$ satisfies the axiom of a complex $\sigma$-measure.
Then, from what we have proved above, there exists a unique extension
to a continuous linear functional
$\overline{\nu}:M_\mathbb{C}\to\mathbb{C}$.
From the fact that $M_\mathbb{C}$ can be injected
as a dense subset inside $\mathcal{W}$,
we can extend $\overline{\nu}$ by continuity (in a unique way)
to $\tilde{\nu}:\mathcal{W}\to\mathbb{C}$. Diagrammatically,
\[
\xymatrix{
\mathcal{L}(\mathcal{W})\ar[d]_{\forall\nu}\ar@{^(->}[r]\ar[dr]
&M_\mathbb{C}\ar@{^(->}[r]\ar[d]_<<<<{\exists!\overline{\nu}}&\mathcal{W}\ar[dl]^{\exists!\tilde{\nu}}\\
[0,1]\ar@{^(->}[r]&\mathbb{C}
}
\]

It remains to check that $\tilde{\nu}$ is a normal state. Given that
a positive operator $A$ has a positive spectrum $sp(A)$
(\cite[p.3]{MR866671}), we can approximate the identity over $sp(A)$
by simple functions with positive coefficients. Hence, any positive
operator $A\in\mathcal{W}$ can be approximated by projectors in
$\mathcal{L}(\mathcal{W})$. In particular, given a continuous linear
map $\tilde{\nu}:\mathcal{W}\to\mathbb{C}$ such that
$\tilde{\nu}(P)\geq 0$ for all $P\in\mathcal{L}(\mathcal{W})$, it
follows that $\tilde{\nu}(A)\geq 0$ for all $A\geq 0$ (i.e.
$\tilde{\nu}$ is positive). The normality condition follows from the
$\sigma$-additivity condition on projection operators.

The bijection between $\sigma$-positive measures on
$\mathcal{L}(\mathcal{W})$ and positive linear functional on
$\mathcal{W}$ is similar replacing $[0,1]$ with $\mathbb{R}$.
\end{proof}

\

Let us now use these constructions to rephrase Gleason's theorem in our
terms. Gleason's theorem can be usually stated as follows (see
\cite[Th. 3.13]{MR1662485} or \cite{MR0096113}):

\

\begin{theorem*}[Gleason]
Let $\mathcal{H}$ be a Hilbert space with $\dim(\mathcal{H})\geq 3$.
Then the set of $\sigma$-positive measures $\nu$ on $\mathcal{L}(\mathcal{H})$
is in bijection with the set of
positive operators $T$ of the trace class such that
\[
\nu=\text{tr}(T-).
\]
\end{theorem*}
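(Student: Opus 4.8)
The plan is to isolate the one genuinely deep ingredient --- the classical Gleason theorem --- and to surround it with the soft arguments that the framework built above makes transparent. Write $\mathcal{L}=\mathcal{L}(\mathcal{H})$. First I would check that the assignment $T\mapsto\nu_T$, with $\nu_T(P):=\mathrm{tr}(TP)$, really lands in $\mathcal{M}^{\sigma}_+(\mathcal{L})$: for a projector $P$ one has $\mathrm{tr}(TP)=\mathrm{tr}(PTP)\geq 0$ since $PTP\geq 0$ and $T$ is of trace class, so $\nu_T$ is positive; and if $\{P_i\}$ is a pairwise orthogonal family with $\bigvee_i P_i=P$ (so that the partial sums converge strongly to $P$), then computing the trace in an orthonormal basis adapted to the ranges of the $P_i$ --- equivalently, applying monotone convergence to the increasing sequence of positive operators $T^{1/2}(P_1+\dots+P_n)T^{1/2}$ --- yields $\mathrm{tr}(TP)=\sum_i\mathrm{tr}(TP_i)$. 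Hence $T\mapsto\nu_T$ is a well defined map into the $\sigma$-positive measures.

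For injectivity I would use density. If $\nu_{T_1}=\nu_{T_2}$, then $\mathrm{tr}\bigl((T_1-T_2)P\bigr)=0$ for every $P\in\mathcal{L}$, hence $\mathrm{tr}\bigl((T_1-T_2)A\bigr)=0$ for every $A$ in the complex linear span of $\mathcal{L}$; that span is norm-dense in $\mathcal{B}(\mathcal{H})$ (the density statement invoked in Proposition \ref{l:embedded}, cf. \cite{MR866671}), and $A\mapsto\mathrm{tr}\bigl((T_1-T_2)A\bigr)$ is norm-continuous because $T_1-T_2$ is of trace class, so it vanishes identically; testing against the operators $|v\rangle\langle w|$ then forces $T_1=T_2$.

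Surjectivity is the hard direction and is precisely the classical Gleason theorem, for which I would simply cite \cite[Th. 3.13]{MR1662485} and \cite{MR0096113} --- our machinery does not provide a shortcut here. For the record, the shape of that argument is as follows: given $\nu\in\mathcal{M}^{\sigma}_+(\mathcal{L})$, restrict it to rank-one projectors to obtain the frame function $f(v):=\nu(|v\rangle\langle v|)$ on the unit sphere, whose sum over any orthonormal basis equals $\nu(\mathbf{1})<\infty$; on each closed subspace of dimension $3$ one proves that such a frame function is \emph{regular}, hence --- this is Gleason's analytic core (spherical harmonics in the original, or any of the later elementary arguments) --- a quadratic form $v\mapsto\langle Sv,v\rangle$ with $S$ self-adjoint; these local quadratic forms agree on overlaps and, using separability of $\mathcal{H}$ together with $\sigma$-additivity of $\nu$, patch into a single bounded positive operator $T$ with $\nu(P)=\mathrm{tr}(TP)$ for all $P$, and the identity $\mathrm{tr}(T)=\nu(\mathbf{1})<\infty$ shows $T$ is of trace class.

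I expect the genuine obstacle to be exactly this surjectivity step --- the regularity (continuity) of frame functions in dimension $3$ --- which is irreducibly nontrivial. The role of the preceding development (Proposition \ref{rep}, Corollary \ref{c:Corollary1}, Proposition \ref{l:embedded}) is to make the passage between $\sigma$-positive measures on $\mathcal{L}(\mathcal{H})$ and densely defined positive functionals on $\mathcal{B}(\mathcal{H})$ automatic, so that Gleason's theorem acquires the clean reading: for $\dim\mathcal{H}\geq 3$ the universal complex measure identifies $\mathcal{M}^{\sigma}_+(\mathcal{L}(\mathcal{H}))$ with the cone of positive trace-class operators.
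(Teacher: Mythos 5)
Your proposal is correct and matches the paper's treatment: the paper offers no proof of this unnumbered theorem at all, simply quoting the classical result with the references \cite{MR1662485} and \cite{MR0096113}, and you likewise defer the genuinely hard direction (regularity of frame functions) to those same sources while correctly supplying only the routine verifications ($T\mapsto\mathrm{tr}(T-)$ is a $\sigma$-positive measure, and injectivity via density of the span of projections). Nothing in your added bookkeeping conflicts with the paper; it is standard and sound.
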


\

We can give a similar result appealing to Proposition \ref{l:embedded}
and its Corollary:

\begin{theorem}
Let $\mathcal{H}$ be a separable
Hilbert space such that $\mathcal{L}(\mathcal{H})\subseteq M_{\mathbb{C}}\subseteq\mathcal{B}(\mathcal{H})$.
Then the set of $\sigma$-positive measures $\nu$ on
$\mathcal{L}(\mathcal{H})$
is in bijection with the set of
positive operators $T$ of the trace class such that
\[
\nu=\text{tr}(T-).
\]
\end{theorem}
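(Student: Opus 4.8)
The plan is to obtain the statement as the composition of two identifications. Put $\mathcal{W}=\mathcal{B}(\mathcal{H})$ and note that the standing hypothesis $\mathcal{L}(\mathcal{H})\subseteq M_{\mathbb{C}}\subseteq\mathcal{B}(\mathcal{H})$ is precisely the injectivity of $\overline{i}$ required in the Corollary to Proposition~\ref{l:embedded}. That Corollary, together with the $\sigma$-additive version of our constructions (cf.\ the remark following Theorem~\ref{orth-groe} and Corollary~\ref{c:Corollary1}), already provides a bijection between $\sigma$-positive measures $\nu$ on $\mathcal{L}(\mathcal{H})$ and a distinguished family of positive linear functionals $\tilde{\nu}$ on $\mathcal{B}(\mathcal{H})$: one sends $\nu$ to the unique continuous linear extension $\tilde{\nu}$ of $\overline{\nu}\colon M_{\mathbb{C}}\to\mathbb{C}$, which exists because $\overline{i}(M_{\mathbb{C}})$ is norm-dense in $\mathcal{B}(\mathcal{H})$ by Proposition~\ref{l:embedded}; conversely one restricts a functional to $\mathcal{L}(\mathcal{H})$. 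So the theorem reduces to identifying this family with the functionals of the form $A\mapsto\text{tr}(TA)$ for $T$ a positive trace-class operator.

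First I would check that the functionals $\tilde{\nu}$ so obtained are exactly the \emph{normal} positive linear functionals on $\mathcal{B}(\mathcal{H})$. Positivity of $\tilde{\nu}$ is established as in the proof of the Corollary to Proposition~\ref{l:embedded}: an arbitrary positive operator is approximated in norm by finite positive combinations of projectors via the spectral theorem, and $\tilde{\nu}$ is positive on the latter. For normality, observe that by construction $\tilde{\nu}|_{\mathcal{L}(\mathcal{H})}=\nu$ is countably additive on pairwise orthogonal families of projections; by the classical characterization of normal functionals on a von Neumann algebra — a positive functional is normal if and only if it is completely additive on orthogonal families of projections, and on a separable $\mathcal{H}$ countable families suffice — this forces $\tilde{\nu}$ to be normal. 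Conversely, any normal positive functional on $\mathcal{B}(\mathcal{H})$ restricts to a $\sigma$-positive measure on $\mathcal{L}(\mathcal{H})$, and the universal property of $\pi_{\mathbb{C}}$ together with uniqueness of the continuous extension make the two operations mutually inverse.

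It then remains to invoke the classical description of the predual: $\mathcal{B}(\mathcal{H})$ has predual the ideal of trace-class operators, under the pairing $(T,A)\mapsto\text{tr}(TA)$, and under this isomorphism the normal positive functionals correspond exactly to the positive trace-class operators (testing a normal positive $\tilde{\nu}=\text{tr}(T\cdot)$ on a rank-one projector $|v\rangle\langle v|$ gives $\langle v,Tv\rangle\geq 0$; see the references on von Neumann algebras, e.g.\ \cite{MR866671}). Composing with the bijection from the previous paragraph yields $\nu\leftrightarrow T$ with $\nu=\text{tr}(T-)$ on $\mathcal{L}(\mathcal{H})$. The easy half of this last step — that for positive trace-class $T$ the restriction of $\text{tr}(T-)$ to $\mathcal{L}(\mathcal{H})$ is indeed a $\sigma$-positive measure — is immediate from monotone convergence applied to the trace expansion.

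The step I expect to be the main obstacle is the normality claim: upgrading the bare $\sigma$-additivity of $\nu$ on orthogonal families of projections to $\sigma$-weak continuity of the extension $\tilde{\nu}$ on all of $\mathcal{B}(\mathcal{H})$. This is exactly where the hypothesis $\mathcal{L}(\mathcal{H})\subseteq M_{\mathbb{C}}$ (injectivity of $\overline{i}$) is used: it makes the factorization through $M_{\mathbb{C}}$ unambiguous and keeps $\mathcal{L}(\mathcal{H})$ linearly spanning a norm-dense subspace, so that the normality characterization can be applied to $\tilde{\nu}$. Once this is in place everything else is classical; in particular the dimension restriction $\dim(\mathcal{H})\geq 3$ of the usual formulation of Gleason's theorem is here replaced by the structural hypothesis $\mathcal{L}(\mathcal{H})\subseteq M_{\mathbb{C}}\subseteq\mathcal{B}(\mathcal{H})$ rather than being needed directly.
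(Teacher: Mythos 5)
Your proposal is correct and takes essentially the same route as the paper: reduce to the Corollary of Proposition \ref{l:embedded} with $\mathcal{W}=\mathcal{B}(\mathcal{H})$, and then identify the resulting positive functionals on $\mathcal{B}(\mathcal{H})$ with positive trace-class operators via $\text{tr}(T-)$. The only difference is that you spell out the normality of the extension $\tilde{\nu}$ (via complete additivity on orthogonal projections, with countable families sufficing by separability), a step the paper compresses into ``as it is well known'' even though it is genuinely needed --- non-normal positive functionals on $\mathcal{B}(\mathcal{H})$ are not of trace form --- so your version is a careful rendering of the same argument rather than a different one.
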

\begin{proof}
According to the previous Corollary taking $\mathcal{W}=\mathcal{B}(\mathcal{H})$,
a $\sigma$-positive measure is the same a positive linear
functional $\nu:\mathcal{B}(\mathcal{H})\to\mathbb{C}$.
As it is well known, measures of this kind are given by
$\text{tr}(T-)$ for some positive operator $T\in
\mathcal{B}(\mathcal{H})$. Then, $\sigma$-positive measures on
$\mathcal{L}(\mathcal{H})$ are in bijection with
functionals $\text{tr}(T-)$,
where $T$ is some positive operator.
Given that $I\in\mathcal{L}(\mathcal{H})$,
by imposing $\nu(I)=1$ (which is the normalization condition for any
quantum state), we obtain that $T$ is of the trace class.
\end{proof}

\

\section{Conclusions}\label{s:Conclusions}

In this paper we presented a characterization of states in
orthocomplemented lattices. This was done by appealing to a
non-commutative version of geometric probability theory based on
algebraic geometry techniques. Our theoretical framework incorporates invariant
states (under groups acting by automorphisms) in a natural way. Our
main result, given by Theorem \ref{orth-groe}, allows us to
characterize states in orthomodular lattices and invariant states as
well. Finally, we were able to recast  Gleason's theorem in these terms.

\

\end{document}